\numberwithin{equation}{section}
\numberwithin{figure}{section}
\theoremstyle{plain}
\newtheorem*{thm*}{\protect\theoremname}
\theoremstyle{plain}
\newtheorem{thm}{\protect\theoremname}[section]
\theoremstyle{plain}
\newtheorem{lem}[thm]{\protect\lemmaname}
\theoremstyle{remark}
\theoremstyle{plain}
\newtheorem{prop}[thm]{\protect\propositionname}
\theoremstyle{plain}
\theoremstyle{plain}
\theoremstyle{plain}
\theoremstyle{plain}
\theoremstyle{plain}
\theoremstyle{plain}
\newtheorem{keylem}[thm]{Key lemma}
\theoremstyle{plain}
\newcommand{\SLE}{\mathrm{SLE}}
\newcommand{\PR}{\mathbb{P}}
\newcommand{\EX}{\mathbb{E}}
\newcommand{\bR}{\mathbb{R}}
\newcommand{\R}{\bR}
\newcommand{\bZ}{\mathbb{Z}}
\newcommand{\Z}{\bZ}
\newcommand{\bN}{\mathbb{N}}
\newcommand{\N}{\bN}
\newcommand{\bC}{\mathbb{C}}
\newcommand{\C}{\bC}
\newcommand{\bH}{\mathbb{H}}
\newcommand{\domain}{\Lambda}
\newcommand{\bdry}{\partial}
\newcommand{\confmap}{\phi} 
\newcommand{\confmapD}{\confmap^{-1}} 
\newcommand{\confmapDH}{\boldsymbol{\psi}} 
\newcommand{\confmapHD}{\boldsymbol{\psi}^{-1}} 
\newcommand{\confmapBdry}{\varphi} 
\newcommand{\ctsfcns}{\mathcal{C}} 
\newcommand{\set}[1]{\left\{  #1\right\}  }
\newcommand{\clos}[1]{\overline{ #1 }}
\newcommand{\HarmMeas}{\mathsf{H}}
\newcommand{\diam}{\mathrm{diam}}
\newcommand{\eps}{\epsilon}
\newcommand{\Area}{\mathrm{Area}}
\newcommand{\UnitD}{\mathbb{D}}
\newcommand{\Cara}{\stackrel{\mathrm{Cara}}{\longrightarrow}}
\newcommand{\domainn}{\domain_*}
\newcommand{\n}{*}
\renewcommand{\curve}{\eta}
\newcommand{\calA}{\mathcal{A}}
\newcommand{\calB}{\mathcal{B}}
\newcommand{\bbH}{\mathbb{H}}
\newcommand{\bbP}{\mathbb{P}}
\newcommand*{\centerfloat}{%
  \parindent \z@
  \leftskip \z@ \@plus 1fil \@minus \textwidth
  \rightskip\leftskip
  \parfillskip \z@skip}
\providecommand{\corollaryname}{Corollary}
\providecommand{\lemmaname}{Lemma}
\providecommand{\propositionname}{Proposition}
\providecommand{\remarkname}{Remark}
\providecommand{\theoremname}{Theorem}
\providecommand{\conjecturename}{Conjecture}
\begin{document}

\

\vspace{2.5cm}

\begin{center}
\LARGE \bf \scshape {Limits of conformal images and conformal images of limits for planar random curves}
\end{center}

\vspace{0.75cm}

\begin{center}
{\large \scshape Alex M. Karrila}\\
{\footnotesize{\tt alex.karrila@abo.fi; alex.karrila@gmail.com}}\\
{\small{\AA bo Akademi Matematik}}\\
{\small{Henriksgatan 2, 20500 \AA bo, Finland}}

\end{center}

\vspace{0.75cm}

\begin{center}
\begin{minipage}{0.85\textwidth} \footnotesize
{\scshape Abstract.}
We address the scaling limits of random curves arising from, e.g., planar lattice models, especially in rough domains. The well-known precompactness conditions of Kemppainen and Smirnov show that certain crossing probability estimates guarantee the subsequential weak convergence of the random curves in the topology of unparametrized curves, as well as in a topology inherited from the unit disc via conformal maps. We complement this result by proving that proceeding to weak limit commutes with changing topology, i.e., limits of conformal images are conformal images of limits, with minimal boundary regularity assumptions on the domains where the random curves lie. Such rough boundaries are especially interesting if, in the context of multiple random curves, a limit candidate is defined in terms of iterated SLE type processes with $\kappa \in (4, 8)$, and one hence needs to study (boundary-touching) curves in domains slit by other random curves.
\\ \\
\textbf{MSC classes:} Primary: 60Dxx; Secondary: 60J67, 82B20.\\
\textbf{Keywords:} Random curves, scaling limits, conformal invariance, rough boundaries, SLE, multiple SLE.
\end{minipage}
\end{center}

\vspace{0.75cm}
\tableofcontents


\section{Introduction}

%
%
%
%
%
%

\subsection{Background}

In physics, Conformal field theories appear as scaling limit candidates for planar lattice models of statistical mechanics at criticality~\cite{Polyakov, BPZ-infinite_conformal_symmetry_in_2D_QFT, BPZ-infinite_conformal_symmetry_of_critical_fluctiations, Cardy-conformal_invariance_and_statistical_mechanics}.
Mathematical-physics proofs of conformal invariance properties in such scaling limits have only been achieved more recently, 
one successful approach being to prove the convergence of interface curves to conformally invariant random curves, called Schramm--Loewner evolutions (SLEs)~\cite{Schramm-LERW_and_UST, RS-basic_properties_of_SLE}. Such convergence proofs have been established in several lattice models~\cite{Smirnov-critical_percolation, LSW-LERW_and_UST, SS05, CN07, Zhan-scaling_limits_of_planar_LERW, SS09,  CDHKS-convergence_of_Ising_interfaces_to_SLE}. Also different variants of SLEs~\cite{LSW-Conformal_restriction, Dubedat-duality, Zhan-scaling_limits_of_planar_LERW} and multiple SLEs~\cite{BBK-multiple_SLEs, Dubedat-commutation, LK-configurational_measure, PW, BPW} have been introduced, and shown to serve as scaling limits~\cite{HK-Ising_interfaces_and_free_boundary_conditions, Izyurov-critical_Ising_interfaces_in_multiply_connected_domains, KS-FK_Hypegeometric_SLE, BPW, GW18, mie3}.

The most typical route to an SLE convergence proof consists of two parts: precompactness, i.e., the existence of subsequential scaling limits, and identification of any subsequential limit. The precompactness part is usually not model-specific, in the sense that it is deduced by verifying certain crossing probability estimates~\cite{AB-regularity_and_dim_bounds_for_random_curves, KS} for the model of interest. These estimates for instance guarantee that the random curve laws on lattices of increasingly fine mesh are precompact in a standard topology of unparametrized curves, as well as in a topology inherited from curves on the unit disc $\UnitD$ via conformal maps. The main theorem~\ref{thm: extension of Kemppainen's precompactness} of this paper states that the weak limits in these two topologies agree, i.e., limits of conformal images are conformal images of limits, assuming the same crossing estimates as~\cite{KS} and imposing essentially no boundary regularity assumptions.

\begin{figure}
\includegraphics[width=0.35\textwidth]{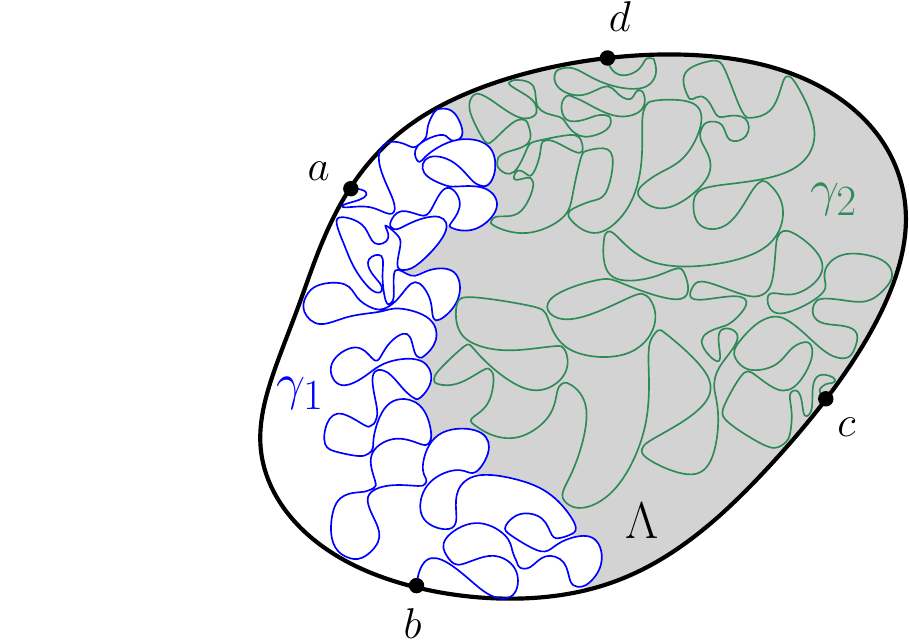}
\qquad
\includegraphics[width=0.35\textwidth]{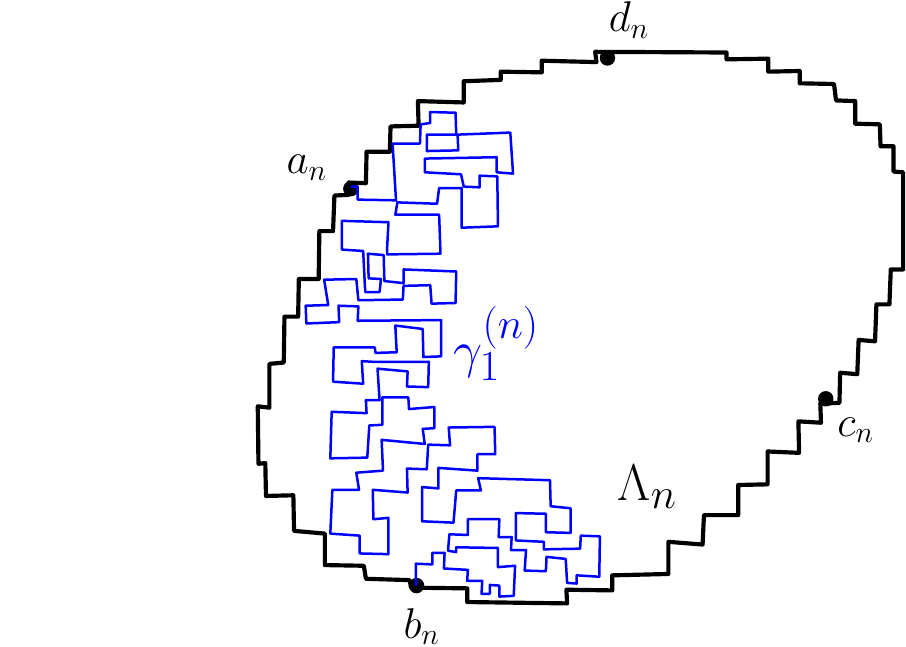}
\caption{
\label{fig:schematic application}
A schematic illustration of a situation where irregular boundary approximations naturally arise: For example multiple SLE type curves (left panel) can be defined in terms of iterated SLE processes, so that the second curve is sampled in the area left after sampling the first curve (shaded). The curves are non-self-crossing but for $\kappa \in (4, 8)$ they intersect themselves and each other in a fractal manner. In a discrete iterated sampling (right panel), the corresponding component of $\domain_n \setminus \gamma^{(n)}_1$ may now have ``deep fjords'', even if the approximating domains $\domain_n$ are well-behaved. Based on the present note, no \textit{a priori} results on $\gamma^{(n)}_1$ need to be proven to exclude such fjords.
}
\end{figure}

The corresponding result is easily deduced if either suitable boundary regularity is assumed, or if boundary visits of the random curves can be excluded (see Section~\ref{subsec:easy} for details), which is probably why it has not, to the best of our knowledge, been priorly explicated. Apart from independent interest, studying domains with irregular boundaries is motivated by, e.g., multiple SLEs type curves with $\kappa \in (4, 8)$, 
due to their definition in terms of iterated SLE type growth processes, see Figure~\ref{fig:schematic application}. The results in this paper do not rely on any underlying lattice structure, and they also have interesting implications in terms of chordal SLEs, which will be discussed.

\subsection{The main application}
\label{subsec: main application}

All the basic SLE type curves are defined via a driving function $W: \R_{\ge 0} \to \R$, $t \mapsto W_t$, which first yields via a Loewner type equation a curve in a reference domain, say the curve $\gamma_\UnitD$ in the unit disc $\UnitD$, and the SLE type curve $\gamma$ in the domain of interest $\domain$ is then defined via conformal invariance:
\begin{align*}
W \stackrel{ \text{Loewner} }{\longrightarrow} \gamma_\UnitD \stackrel{ \text{conformal} }{\longrightarrow} \gamma.
\end{align*}
Now, assume that we study a lattice model on graph domains $\domain_n$, $n \in \N$, approximating the domain $\domain$ in some sense, and that we wish to show that some discrete interface curves $\gamma^{(n)}$ on $\domain_n$ converge weakly to the SLE type curve $\gamma$ in $\domain$. Proceeding with the strategy above, the results of~\cite{AB-regularity_and_dim_bounds_for_random_curves, KS} often guarantee the subsequential convergence of $\gamma^{(n)}$ to some weak limit $\gamma$. Due to the very definition of SLE, the most typical and straightforward way to identify this limit as an SLE (cf.~\cite{KS}) is to first map it to a driving function,
\begin{align*}
\gamma^{(n)} \stackrel{ \text{conformal} }{\longrightarrow} \gamma^{(n)}_\UnitD \stackrel{ \text{Loewner} }{\longrightarrow} W^{(n)},
\end{align*}
and then prove precompactness also in the sense of the curves $\gamma^{(n)}_\UnitD$ and the functions $W^{(n)}$, and finally show, informally speaking, that the diagram 
\[
\xymatrix@C=3.5em{
\gamma^{(n)} \ar[d]^{n \to \infty} \ar[r]^{\text{conformal}} & \gamma_{\UnitD }^{(n)} \ar[d]^{ n \to \infty } \ar[r]^{\text{Loewner}}  & W^{(n)}  \ar[d]^{n \to \infty}  \\
\gamma & \gamma_{\mathbb{D}} \ar[l]^{\text{conformal}}  & W \ar[l]^{\text{ Loewner }}
}
\]
with mappings and weak convergences, commutes. The main result of this paper addresses the bottom left horizontal arrow in the diagram above (in the case of minimal boundary regularity assumptions and boundary-visiting curves), while the other ones follow from the results of~\cite{AB-regularity_and_dim_bounds_for_random_curves, KS}.


\textbf{Acknowledgements.} The author is very grateful to the anonymous referee of this note, whose remarks greatly helped to improve several proofs, as well as the presentation. The author also wishes to thank Dmitry Chelkak, Christian Hagendorf, Eveliina Peltola, Fredrik Viklund, and especially Antti Kemppainen and Kalle Kyt\"{o}l\"{a} for useful comments and discussions. The author was financially supported by the Vilho, Yrj\"{o} and Kalle V\"{a}is\"{a}l\"{a} Foundation and by the Academy of Finland (grant \#339515).

\section{Preliminaries}

\subsection{Conformal structures}
\label{subsubsec: Cara convergence}

The \emph{Riemann (uniformization) map} from a simply-connected domain $\domain \subsetneq \C$ to $\UnitD$ normalized at $u \in \domain$ is the unique conformal map $\confmap: \domain \to \UnitD$ such that $\confmap(u)=0$ and $\confmap'(u) \in \R_+$. 
A sequence of simply-connected domains $( \domain_n )_{n \in \N}$ converges to $\domain$ with respect to $u$ in the \textit{Carath\'{e}odory sense}, denoted $\domain_n \Cara \domain$, if $u$ lies in all the domains and the inverse Riemann maps $\confmap_n^{-1}$ as above converge uniformly on compact subsets of $\UnitD$ to $\confmap^{-1}$, where all the conformal maps are normalized at the same point $u$. By Carath\'{e}odory's kernel theorem~\cite[Theorem~1.8]{Pommerenke-boundary_behaviour_of_conformal_maps}, this occurs if and only if every $z \in \domain$ has some neighbourhood $V_z$ such that $V_z \subset \domain_n$ for all large enough $n$, and for every $p \in \bdry \domain$ there exists a sequence $p_n \in \bdry \domain_n$ such that $p_n \to p$. Note that whether or not a convergence $\domain_n \Cara \domain$ occurs is hence independent of the choice of $u \in \domain$ (only taking the tail of the sequence $( \domain_n )_{n \in \N}$ if needed to ensure $u \in \domain_n$).

The boundary behaviour of conformal maps is a central issue in this note. A \textit{cross cut} of a simply-connected domain $\domain$ is an open Jordan arc $S$ in $\domain$ such that $\overline{S} = S \cup \{ a, b \}$, where $a, b \in \bdry \domain$; a \textit{null chain} is a sequence $(S_n)_{n \in \N}$ of disjoint cross cuts, nested in the sense that for all $n$, $S_n$ separates $S_{n+1}$ from $S_0$ in $\domain$, and such the the spherical-metric diameter of $S_n$ tends to zero as $n \to \infty$. The intuitive notion of a ``boundary point'' often needs to be replaced by a \textit{prime end}, i.e., an equivalence class of null chains of $\domain$ when two null chains $(S_n)_{n \in \N}$ and $(\tilde{S})_{n \in \N}$ are equivalent if for any $m$ large enough there exists $n$ such that the cross cut $S_m$ separates $\tilde{S}_n$ from $\tilde{S}_0$ in $\domain$, and $\tilde{S}_m$ separates ${S}_n$ from ${S}_0$ in $\domain$. The prime end theorem by Carath\'{e}odory~\cite[Theorem~2.15]{Pommerenke-boundary_behaviour_of_conformal_maps} then states that a conformal map $\confmap: \domain \to \UnitD$ induces a bijection between the prime ends of $\domain$ and the unit sphere $ \bdry \UnitD$, such that if  $(S_n)_{n \in \N}$ is a representative of a prime end $p$, then $\confmap(S_n)$ converge to the image point of $p$ on $\bdry \UnitD$. We extend the definition of Carath\'{e}odory convergence to domains with finitely many marked prime ends using this bijection: e.g., $(\domain_n; p^{(n)}) \Cara (\domain; p )$ if $\domain_n \Cara \domain$ and $\confmap_n (p^{(n)})  \to   \confmap (p)$ on $\bdry \UnitD$, where we also denoted the induced prime ends bijections by $\confmap_n$ and $\confmap$, respectively.

A less general but arguably more direct boundary extension of a conformal map $\confmap^{-1}: \UnitD \to \domain$ to $\bdry \UnitD$ is obtained via radial limits. Formally, for $0 < \epsilon \leq 1$, denote by $P_\eps : \overline{\UnitD} \to \UnitD$ the radial projection
\begin{align*}
P_\eps (z) := \tfrac{z}{ \vert z \vert } \min \{ 1- \eps, \vert z \vert \}.  
\end{align*}
Then, the \textit{radial limit} of $\confmap^{-1}$ at $z \in \bdry \UnitD$ is
given by $\lim_{\eps \downarrow 0} \confmap^{-1} \circ P_\eps (z)$;
by the classical Fatou theorem this limit indeed exists for Lebesgue almost every $z \in \bdry \UnitD$ if $\domain$ is bounded. While the definition above involves the conformal map $\confmap^{-1}$, both the existence and the value of a radial limit at the point on $\bdry \UnitD$ that corresponds to a given prime end $p$ are actually properties of only the domain $\domain$ and 
 $p$ that do not depend on the choice of the conformal map;\footnote{This essentially stems from the fact that the conformal (M\"{o}bius) maps $\UnitD \to \UnitD$ are conformal and differentiable also on the boundary $\partial \UnitD$, and a radial (i.e., boundary-normal) line segment maps to a radial line segment up to a second-order correction; see, e.g.,~\cite[Corollary~2.17]{Pommerenke-boundary_behaviour_of_conformal_maps} for a formal proof.} we thus simply say that $p$ \textit{has radial limits}.

A \textit{topological quadrilateral}, or a \textit{quad}, $(Q; S_0, S_1, S_2, S_3)$ consists of a planar domain $Q$ homeomorphic to a square and arcs $S_0, S_1, S_2, S_3$ of its boundary, indexed counter-clockwise and corresponding to the edges of the square. We will denote $(Q; S_0, S_1, S_2, S_3)$ by simply $Q$ if the sides are clear in the context. The conformal structure of a quad is captured by the \textit{modulus} $m(Q; S_0, S_1, S_2, S_3)$ (also called the extremal distance of $S_0$ and $S_2$ in $Q$); it is the unique $L > 0$ such that there exists a conformal map from $Q$ to the rectangle $(0, L) \times (0,1)$, so that the sides $S_0, S_1, S_2, S_3$ of $Q$ correspond to the edges of the rectangle, and $S_0$ to $\{ 0 \} \times [0,1]$ (see e.g.~\cite[Chapter~4]{Ahlfors}).

\subsection{Probability measures on metric spaces}
\label{subsec:proba on metric spaces}

Let $(Y, d_Y)$ denote a metric space and $\calB_Y$ the corresponding Borel sigma algebra. A sequence $\mu_n$ of probability measures on $(Y, \calB_Y)$ \textit{converges weakly} to the measure $\mu$ on $(Y, \calB_Y)$  if $\mu_n[f] \to \mu[f]$ for all bounded continuous functions $f: Y \to \R$.
A collection $\{ \mu_\alpha \}_{\alpha \in \calA}$ of probability measures on $(Y, \calB_Y)$ is \textit{precompact (in the topology of weak convergence)} if every sequence $\mu_n$ from that collection contains a subsequence that converges weakly. The collection is \textit{tight} if for every $\epsilon > 0$ there exists a compact set $K \subset Y$ such that $\mu_\alpha [K] > 1 - \eps$ for all $\alpha \in \calA$. By Prokhorov's theorem a tight collection is precompact, and if $(Y, d)$ is separable and complete, then tightness and precompactness are equivalent.

The most prominent non-trivial metric space in this note is the \textit{space of unparametrized plane curves} $X(\C)$.
A planar \emph{curve} is a continuous mapping $\gamma: [0,1] \to \C$.
We define the equivalence relation $\sim$ on curves by setting $\gamma \sim \tilde{\gamma}$, if
\begin{align*}
\inf_{\psi} \bigg(  \sup_{t \in [0,1]} \vert \gamma(t) - \tilde{\gamma} \circ \psi (t) \vert  \bigg) = 0,
\end{align*}
where the infimum is taken over increasing (hence continuous) bijections $ \psi: [0,1] \to [0,1]$, and $X(\C)$ is the space of these equivalence classes. We will always study an equivalence class through a representative, and we will not make notational difference between a curve, its equivalence class, or its trace. The space $X(\C)$ is equipped with the metric
\begin{align*}
d(\gamma, \tilde{\gamma}) = \inf_{\psi} \bigg(  \sup_{t \in [0,1]} \vert \gamma(t) - \tilde{\gamma} \circ \psi (t) \vert  \bigg),
\end{align*}
where the infimum is again over increasing bijections $ \psi: [0,1] \to [0,1]$ (and hence independent of the choice of representatives).

The (closed) subset of $X(\C)$ consisting of curves that stay in $\clos{\UnitD}$ is denoted by  $X(\clos{\UnitD})$. By the closedness, in particular, if $\mu_n$ are supported on $X(\clos{\UnitD})$ and $\mu_n \to \mu$ weakly on $X(\clos{\UnitD})$ (resp. weakly on $X(\C)$), then  also $\mu_n \to \mu$ on $X(\C)$, \textit{a fortiori} (resp. $\mu$ is supported on and the weak convergence also holds on $X(\clos{\UnitD})$, by the Portmanteau theorem).

The space $\ctsfcns$ of continuous functions $W_\cdot : \R_{\ge 0} \to \R$ is studied in the metric of uniform convergence over compact subsets:
\begin{align}
\label{eq: metric of uniform convergence over compact subsets for functions of reals}
d(W, \tilde{W}) = \sum_{n \in \N} 2^{-n} \min \{ 1, \sup_{t \in [0,n]} \vert \tilde{W}_t - W_t \vert \}.
\end{align}
The spaces $X(\C)$ and $\ctsfcns$ are both complete and separable (which is essentially inherited from the space of continuous functions on $[0, 1]$ with the sup norm); in particular, Prokhorov's theorem applies.

\subsection{(Schramm--)Loewner evolutions}

We briefly recap Loewner evolutions and SLEs; note that the contribution and proofs of this paper actually have little to do with them, even if they are vital to understand the context.

\subsubsection{\textbf{The deterministic case}}

The \textit{Loewner differential equation} in the upper half-plane $\bH$ determines a family of complex analytic mappings $g_t$, $t \ge 0$ by $g_0 (z) = z \in \bH$ and
\begin{align}
\label{eq: Loewner ODE}
\partial_t g_t (z) &= \frac{2}{g_t(z) - W_t}, 
\end{align}
where $W_\cdot:\R_{ \ge 0} \to \R$ is a given continuous function, called the \textit{driving function}.
For a given $z \in \bH$, the solution $g_t(z)$ is defined up to a possibly infinite explosion time $\tau(z)$ when $W_\cdot$ and $g_\cdot (z)$ first collide.
The set where the solution is not defined is denoted by
$
K_t =  \{ z \in \bH : \tau(z) \le t  \} .
$
The sets $K_t$ are growing in $t$, and for all $t$, the are \textit{hulls}, i.e., $K_t$ are bounded, closed in $\bH$, and $H_t =: \mathbb{H} \setminus K_t $ is simply-connected. Furthermore, $g_t$ is a conformal map $H_t \to \bH$ such that $g_t (z) = z + \frac{2t}{z} + O(z^{-2}) $ as $z \to \infty.$

Conversely, for a family of growing hulls $(K_t)_{t \geq 0}$, there exists a driving function $W_\cdot$ so that $K_t$, up to time re-parametrization, are obtained as the Loewner hulls of this function, if $K_t$ satisfy the so-called \textit{local growth condition} and and their \textit{half-plane capacity} tends to infinity as $t \uparrow \infty$; see, e.g.,~\cite{Lawler-SLE_book} for details. Such families $(K_t)_{t \geq 0}$ are called \textit{Loewner growth processes}, and the Loewner differential equation thus produces a bijection between Loewner growth processes and continuous functions $W: \R_{\geq 0} \to \R$.

We say that a Loewner growth process $(K_t)_{t \ge 0}$ is \textit{generated} by a continuous map $\gamma_{\bH}:\R_{ \ge 0} \to \overline{\bH}$ if $H_t$ is the unbounded component of $\mathbb{H} \setminus \gamma_{\bH}([0,t])$ for all $t$. Whether a given $\gamma_{\bH}$ (up to re-parametrization) generates some Loewner growth process can be determined based on the aforementioned conditions; in particular, the local growth condition always holds if $\gamma_{\bH}$ is a simple chordal curve staying in $\bH$ except for the end points.

Throughout this paper, we fix conformal (M\"{o}bius) maps $\confmapDH$ and $\confmapHD$ taking the domain $(\UnitD; -1, 1)$ with marked prime ends to $(\bH; 0, \infty)$ and vice versa, say for definiteness
\begin{align*}
\confmapDH (z) = i \tfrac{z+1}{1-z}, \qquad \confmapHD (z) = \tfrac{z - i}{z + i}.
\end{align*}
If the hulls corresponding to a driving function $W$ are generated by $\gamma_{\bH}$ and $\gamma_{\bH} (t) \stackrel{t \to \infty}{\longrightarrow} \infty$, we define the curve $\gamma_\UnitD: [0, 1]  \to \overline{ \UnitD }$ by $\gamma_\UnitD (t) = \confmapHD (\gamma_{\bH} ( \frac{t}{1-t} ))$ for $t \in [0,1)$ and $\gamma_\UnitD(1) = 1$ and say that $W_t$ and $\gamma_\UnitD $ are \emph{deterministic Loewner transforms} of each other.

\subsubsection{\textbf{Random Loewner processes}}
\label{subsubsec:rand growth proc}

We equip the space of Loewner growth processes with the metric (topology) of their driving functions $W \in \ctsfcns$. A random Loewner growth process on some probability space is then a growth process valued random variable (i.e., measurable in our topology on $\ctsfcns$). E.g., if $B_t$ is a standard Brownian motion on the probability space $(\Omega, \mathcal{F}, \bbP)$, then the \textit{chordal Schramm--Loewner evolution from $0$ to $\infty$ in $\overline{\bH}$ with parameter $\kappa \ge 0$} ($\SLE(\kappa)$ for short) is the random Loewner growth process with $W_t = \sqrt{\kappa} B_t$. We say that a random curve $\gamma_\UnitD \in X(\overline{\UnitD})$ and a driving function $W \in \ctsfcns$ defined on the same probability space are \textit{(probabilistic) Loewner transforms} of each other if they are almost surely deterministic Loewner transforms of each other. Discrete models with finitely many possible curve trajectories often trivially give rise to such a random Loewner transform pair; the existence of a suitable random variable $\gamma_\UnitD$ for the SLE is less trivial but is recalled in Section~\ref{subsec:SLE as rand curve}.

\section{Some easy special cases and warning examples}

\subsection{Easy special cases}
\label{subsec:easy}

We point out two common special cases where the main application of our results, as explained in the Introduction, is readily handled. The core of both cases is that the uniform convergence ``$\confmap_n^{-1} \to \confmap^{-1}$'' extends to (parts of) the boundary; not even conformality is actually needed. We however stick to the context of conformal maps for simplicity.

\subsubsection{\textbf{Regular boundary approximations}} 

Let $\domain_n \Cara \domain$ and let $\confmap_n: \domain_n \to \UnitD$ and $\confmap: \domain \to \UnitD$ be conformal maps with $\confmap_n^{-1} \to \confmap^{-1}$ uniformly over compact subsets of $\UnitD$. Suppose in addition that $\C \setminus \domain_n$ are uniformly locally connected, as defined in~\cite[Section~2.2]{Pommerenke-boundary_behaviour_of_conformal_maps} (e.g., this occurs if $\C \setminus \domain$ is locally connected and $\domain_n$ is, for each $n$, the domain bounded by a simple loop on $\frac{1}{n} \Z^2$ that stays inside $\domain$ and encloses a maximal number of squares).
Then, the conformal maps $\confmap_n^{-1}, \confmap^{-1}$ can be continuously extended to $\overline{\UnitD}$, and also the convergence $\confmap_n^{-1} \to \confmap^{-1}$ is uniform over $\overline{\UnitD}$~\cite[Theorem~2.1 and Corollary~2.4]{Pommerenke-boundary_behaviour_of_conformal_maps}. In particular $\confmap^{-1} $ and $
\confmap_n^{-1}$ are also continuous as functions $X (\overline{ \UnitD }) \to X(\C)$.

\begin{prop}
\label{prop: easy special case}
Let $ \gamma^{(n)}_\UnitD  \to   \gamma_\UnitD$ weakly in $X (\overline{ \UnitD })$, and suppose that $\confmap_n^{-1} \to \confmap^{-1}$ uniformly over $\overline{\UnitD}$. Then,
$\confmap_n^{-1} ( \gamma^{(n)}_\UnitD ) \stackrel{n \to \infty}{\longrightarrow} \confmap^{-1} ( \gamma_\UnitD )$
weakly in $X(\C)$.
\end{prop}

\begin{proof} 
It is sufficient to show weak convergence with a bounded Lipschitz-continuous test function $f:  X(\C ) \to \R$. Compute
\begin{align*}
\vert \EX^{(n)} & [f(\confmap_n^{-1} ( \gamma^{(n)}_\UnitD ) )] - \EX [f( \confmap^{-1} (\gamma_\UnitD ) )]  \vert \\
& \le \vert \EX^{(n)} [f(\confmap_n^{-1} ( \gamma^{(n)}_\UnitD ) ) 
- f( \confmap^{-1} (\gamma^{(n)}_\UnitD  ) )]  \vert + \vert \EX^{(n)} [f(\confmap^{-1} ( \gamma^{(n)}_\UnitD ) )] - \EX [f( \confmap^{-1} (\gamma_\UnitD ) )]  \vert.
\end{align*}
The second term on the right-hand side becomes arbitrarily small as $n \to \infty $ due to the weak convergence $\gamma^{(n)}_\UnitD \stackrel{n \to \infty}{\longrightarrow} \gamma_\UnitD  $. The first term becomes arbitrarily small by the Lipschitzness of $f$ and the fact that $\confmap_n^{-1} \to \confmap^{-1}$ uniformly over $\overline{\UnitD}$, and hence also $\confmap_n^{-1} \to \confmap^{-1}$ uniformly as maps $X(\overline{\UnitD}) \to X(\C ) $.
\end{proof}

\subsubsection{\textbf{No boundary visits}}

The main application is also readily handled if the boundary visits of $\gamma^{(n)}_\UnitD $ and $\gamma_\UnitD$ can be restricted (\textit{a priori} or \textit{a posteriori}) to only boundary segments with enough boundary regularity.

\begin{prop}
\label{prop:easy 2}
Let $ \gamma^{(n)}_\UnitD  \to   \gamma_\UnitD$ weakly in $X (\overline{ \UnitD })$. Suppose that $\gamma_\UnitD$ is, almost surely, a chordal curve from $-1$ to $1$ hitting no other boundary points, and that the maps $\confmap_n^{-1} $ and $ \confmap^{-1}$ and their uniform convergence can be extended to $\overline{V}_{\pm 1}$, where ${V}_{\pm 1}$ are some neighbourhoods of $\pm 1$ in $\UnitD$. Then, $\confmap_n^{-1} ( \gamma^{(n)}_\UnitD ) \to \confmap^{-1} ( \gamma_\UnitD )$  weakly in $X(\C)$.
\end{prop}

A proof for this proposition is obtained, e.g., from the previous one by approximating $f$ with $gf$, where $g$ is a suitable continuous cut-off function that removes the (unlikely) cases when a curve comes close to $\bdry \UnitD \setminus \overline{V}_{\pm 1}$. We leave the details for the reader. Proposition~\ref{prop:easy 2} is often sufficient, e.g., for multiple SLE applications analogous to Figure~\ref{fig:schematic application} when $\kappa \leq 4$.

\subsection{Warning examples}
\label{subsec:warning}

\begin{figure}
\includegraphics[ width=0.25\textwidth]{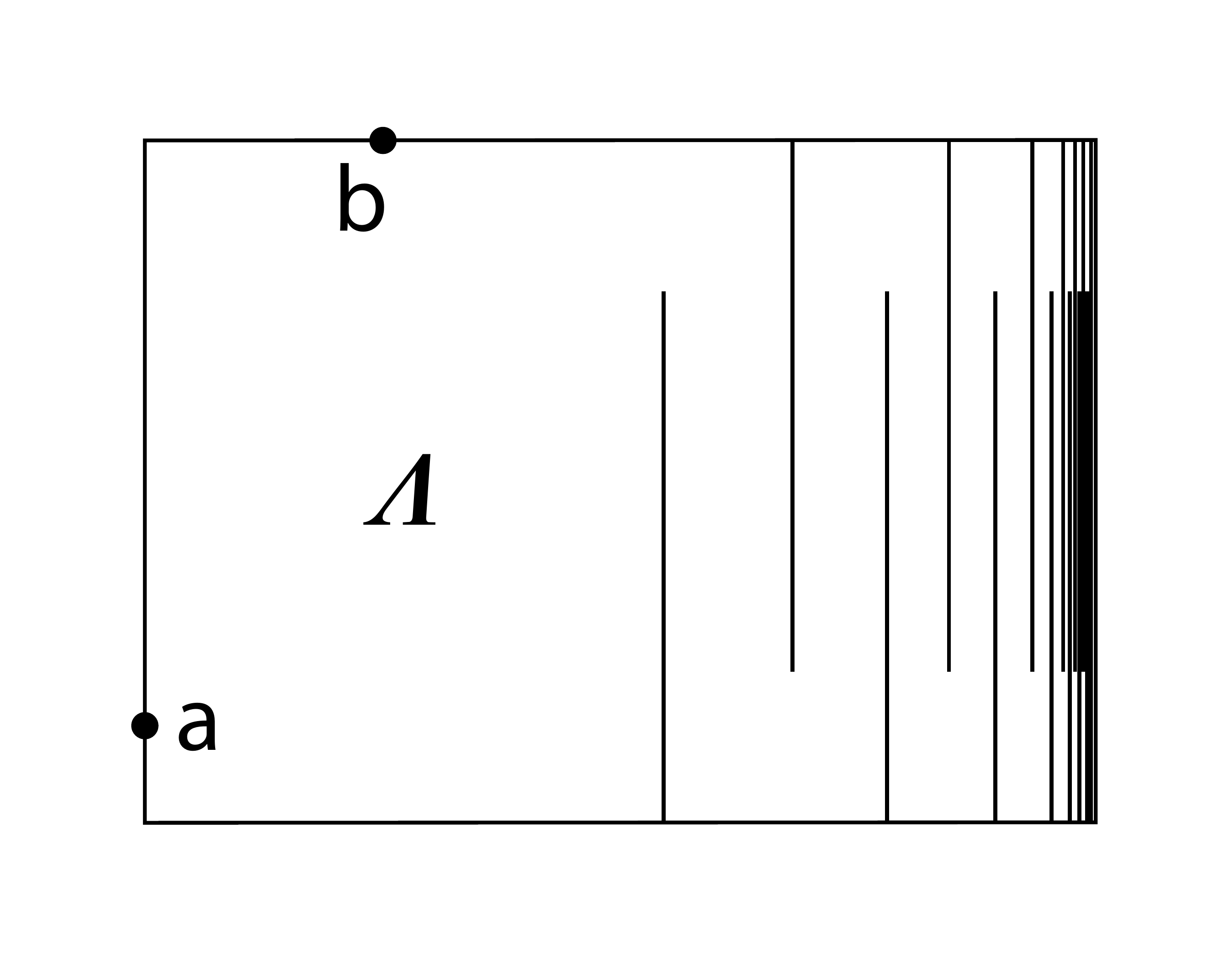} 
\caption{\label{fig: bad domain} An $\SLE(8)$ curve $\gamma_\UnitD$ in $(\UnitD; -1,1)$ cannot be mapped conformally to a chordal curve  $\gamma: [0, 1] \to \C$ in this domain $(\domain; a,b)$.}
\end{figure}

We start by remarking, perhaps trivially, that it is easy to give examples of (non-conformal) homeomorphisms $f_n : \domain_n \to \UnitD$ and $f : \domain \to \UnitD$, such that $f_n^{-1} \to f^{-1}$ uniformly over compact subsets of $\UnitD$, and suitable $\gamma^{(n)}_\UnitD \to \gamma_\UnitD  $ (even deterministic curves)  so that no convergence ``$f_n^{-1} ( \gamma^{(n)}_\UnitD )  \to f^{-1} ( \gamma_\UnitD)  $'' occurs. The conformal structure will indeed play a key role in the proofs.

As a first real warning, general Carath\'{e}odory converging domains $\domain_n \to \domain$ may contain ``deep fjords'', and it may occur that $\gamma^{(n)} \to \gamma$ and $\gamma^{(n)}_\UnitD \to \gamma_\UnitD$ weakly but $\gamma$ does not even stay in $\overline{\domain}$ (see also Figure~\ref{fig: bad boundary}), certainly preventing any relation ``$\gamma = \confmap^{-1} ( \gamma_\UnitD ) $''. The conditions on $\gamma^{(n)}$ from~\cite{KS} however rule this out, apart from the curve end points which we will handle separately in our main theorem.

Secondly, for rough domains $\domain$, making sense of a random curve ``$ \confmap^{-1} ( \gamma_\UnitD ) $'' in the first place is an issue (without any reference to weak convergence): $\confmap^{-1}$ may not have any natural extension to $\overline{\UnitD}$ and even if it has, e.g. by radial limits, the extension may be discontinuous. For instance, the $\SLE(8)$ in $(\UnitD; -1, 1)$ exists as a random Peano curve $\gamma_\UnitD \in X( \overline{ \UnitD } )$ and appears as a scaling limit of a lattice model~\cite{LSW-LERW_and_UST}, but if $\confmap^{-1} : (\UnitD; -1, 1) \to (\domain; a,b)$ is a conformal map to the domain in Figure~\ref{fig: bad domain}, then one can show that, almost surely, there exists no continuous curve $\gamma: [0, 1] \to \C$ such that $\gamma(t) = \confmap^{-1} ( \gamma_\UnitD (t) ) $ for all $t$ with $\gamma_\UnitD (t) \in \UnitD$.
 For $4 < \kappa < 8$, $\SLE (\kappa)$ type curves $\gamma_\UnitD$ visit the boundary on an uncountable set with no isolated points. Straightforwardly defining a curve ``$ \confmap^{-1} ( \gamma_\UnitD ) $'' would thus still require a strong control of $\confmap^{-1}$ on a large-cardinality random boundary set, cf.~\cite{GRS-SLE_in_sc_domains}.
 
Finally, we will have to gain such control uniformly over all $n$.
Informally speaking, the strategy will be to control the radial derivative of $\confmap^{-1}_n$ near $\gamma_\UnitD^{(n)}$ with high probability, uniformly over $n$. This is done by restricting large derivatives to preimages of ``deep fjords'', uniformly over the target domains, while travelling into deep fjords is excluded by~\cite{KS}. Combining with the weak convergence $(\gamma_\UnitD^{(n)}, \confmap^{-1}_n (\gamma_\UnitD^{(n)})) \to (\gamma_\UnitD, \gamma)$ given by precompactness, one then deduces that $\gamma = \confmap^{-1} ( \gamma_\UnitD ) $, in the sense of radial limits. The conclusions can be contrasted with~\cite{GRS-SLE_in_sc_domains} and some other results in ``continuous'' SLE theory, see Section~\ref{sec: applications}.

\section{The main results}

\subsection{Setup}
\label{subsec: setup}

The following setup and notation applies throughout this section. Let $(\domain_n; a_n, b_n)$, $n \in \N$, be simply-connected planar domains with two distinct marked prime ends with radial limits, and let $\confmapBdry_n: \domain_n \to \UnitD$ be conformal maps such that $\confmapBdry_n^{-1} (-1) = a_n$ and $\confmapBdry_n (1) = b_n$ (in the sense of the prime end bijection). A \textit{random chordal Loewner curve model} on $(\domain_n; a_n, b_n)$ is, formally, a probability space $(\PR^{(n)}, \Omega^{(n)}, \mathscr{F}^{(n)})$ with random variables $\gamma^{(n)} \in X(\C)$, $\gamma_\UnitD^{(n)} \in X(\overline{\UnitD})$, and $W^{(n)} \in \ctsfcns$ such that, almost surely, $\gamma_\UnitD^{(n)} $ traverses from $-1$ to $1$, $\gamma^{(n)}$ satisfies $\gamma^{(n)} = \confmapBdry_n^{-1} ( \gamma_\UnitD^{(n)}  )$ (extending $\confmapBdry_n^{-1}$ by radial limits), and $W^{(n)}$ is the Loewner transform of $\gamma_\UnitD^{(n)}$.
We equip $\mathscr{F}^{(n)}$ with the right-continuous filtration $(\mathscr{F}^{(n)}_t)_{t \ge 0}$ of the driving functions $W^{(n)}_{t} \in \ctsfcns$; stopping times refer to this filtration. While a stopping time $\tau$ is thus defined on the interval $[0, \infty)$, we will use abusive notations such as $\gamma^{(n)}([0, \tau])$ for the corresponding segments on curves parametrized by times in $[0,1]$.

\subsection{Kemppanen and Smirnov's theorem}

Let us first review Kemppainen and Smirnov's results.
We start with the hypothesis, formulated in terms of certain crossing conditions. For equivalent hypotheses, see~\cite[Section~2.1.4]{KS}.

For $0 < r < R$ and $z \in \C$, we denote
\begin{align*}
B(z, r) = \set{w \in \C \ : \ \vert w -z \vert < r}, \quad
S(z, r) = \bdry B(z, r), \quad
\quad \text{and} \quad
A(z, r, R) = B(z, R) \setminus \overline{B(z, r)}.
\end{align*}
We say that $(Q; S_0, S_1, S_2, S_3)$  is a \textit{boundary quad} (resp. $A(z, r, R)$ is a \textit{boundary annulus}) of a simply-connected domain $\domain \subset \C$ if if $Q \subset \domain$, $S_1, S_3 \subset \bdry \domain$, and $S_0, S_2 \subset \domain$ (resp. if $ B(z, r) \cap \bdry \domain \ne \emptyset$). We say that a boundary quad (resp. a connected component of a boundary annulus in $\domain$) \textit{disconnects} two prime ends $a$ and $b$ in $\domain$ if it disconnects some neighbourhood of $a$ in $\domain$ from some neighbourhood of $b$ in $\domain$. Finally, for a chordal curve $\gamma$ from $a$ to $b$ in $\domain$, a crossing of a boundary quad $Q$ between $S_0$ and $S_2$ by  is \textit{unforced} if $Q$ does not disconnect $a$ from $b$ (resp. a crossing of a boundary annulus $A(z, r, R)$ between $\bdry B(z, r)$ and $\bdry B (z, R)$ is \textit{unforced} if it occurs in a connected component of $A(z, r, R)$ that does not disconnect $a$ from $b$). We can now formulate the hypothesis (see Figure~\ref{fig: Kemppainen crossing condns}).

\begin{figure}
\includegraphics[width=0.3\textwidth]{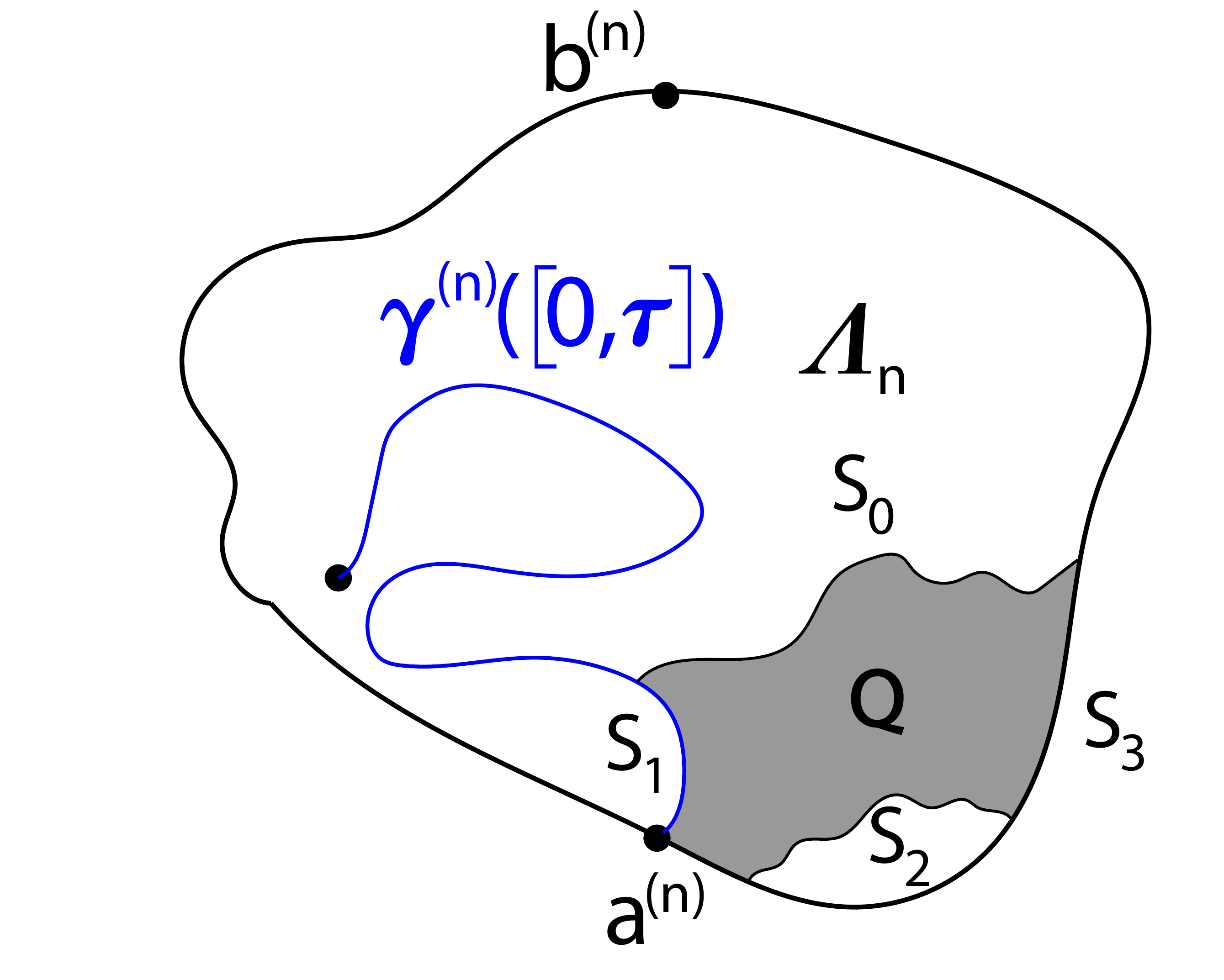}
\qquad
\includegraphics[width=0.3\textwidth]{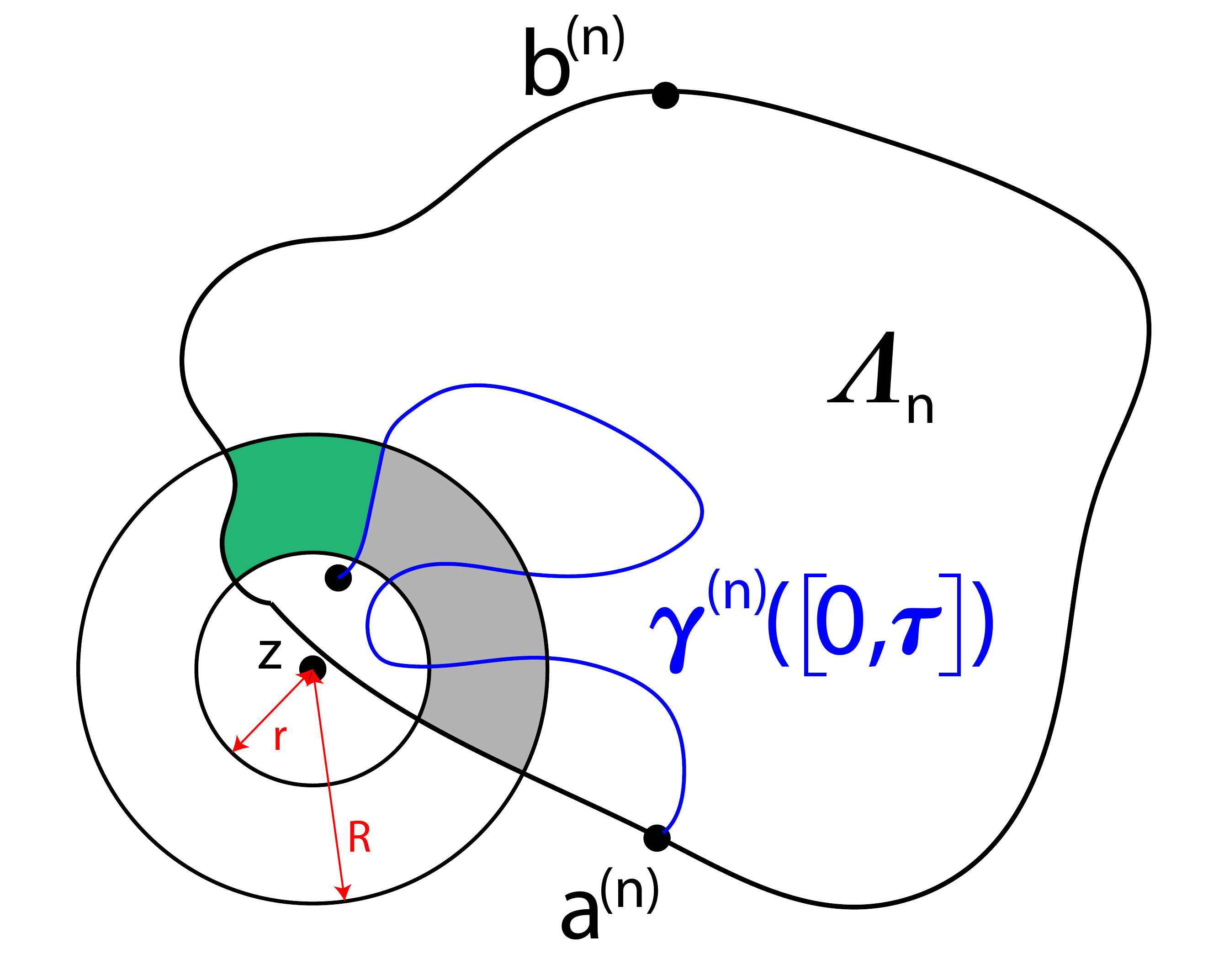}
\caption{
\label{fig: Kemppainen crossing condns} Illustrations of the crossing conditions of~\cite{KS}. Condition (C) depicted in the left figure requires that it is unlikely that the remainder $\gamma^{(n)}([\tau, 1])$ of the curve $\gamma^{(n)}$ crosses a quadrilateral $Q$ of large conformal modulus $m(Q)$ on the boundary of $\domain_n \setminus \gamma^{(n)}([0, \tau])$. The equivalent Condition (G) depicted on the right requires that an unforced crossing of a boundary annulus $A(z, r, R)$ of large ratio $R/r$ is unlikely. The connected components of $A(z, r, R) \cap \domain_n \setminus \gamma^{(n)}([0, \tau])$ where a crossing is unforced are depicted in gray; the remainder curve $\gamma^{(n)}([\tau, 1])$ is topologically forced to cross the component depicted in green.
}
\end{figure}
 
Condition (C): We say that the measures $\PR^{(n)}$ \emph{satisfy condition (C)} if for all $\eps > 0$  there exists $M > 0$, independent of $n$, such that the following holds for all stopping times $\tau < \infty$: for any quad $Q$ with $m(Q) \ge M$ on the boundary of $\domain_n \setminus \gamma^{(n)}([0, \tau])$, we have
\begin{align*}
\PR^{(n)} [\gamma^{(n)}([\tau, 1]) \text{ makes an unforced crossing of } Q
\; \vert \; \mathscr{F}^{(n)}_\tau ] \le \eps, \qquad \text{almost surely}.
\end{align*}

Condition (G): We say that the measures $\PR^{(n)}$ \emph{satisfy condition (G)} if for all $\eps > 0$  there exists $M > 0$, independent of $n$, such that the following holds for all stopping times $\tau < \infty$: for any annulus $A(z, r, R)$ with $R/r \ge M$ on the boundary of $\domain_n \setminus \gamma^{(n)}([0, \tau])$, we have
\begin{align*}
\PR^{(n)} [\gamma^{(n)}([\tau, 1]) \text{ makes an unforced crossing of }  A(z, r, R)
\; \vert \; \mathscr{F}^{(n)}_\tau ] \le \eps, \qquad \text{almost surely}.
\end{align*}

We also point out~\cite[Remark~2.9]{KS} here: if the curves $\gamma^{(n)}$ live on finite graphs, it suffices to check these conditions for stopping times in the sparser, discrete filtration $(\mathscr{F}^{(n)}_{\tau_i})_{i \in \N}$, where $\tau_i$ is the hitting time of the $i$:th vertex on the curve $\gamma^{(n)}$. We are now ready to state the theorem.
 
 \begin{thm}
 \label{thm: Kemppainen's precompactness} \cite[Theorem~1.5 and Corollary~1.7]{KS}
In the setup of Section~\ref{subsec: setup}, suppose that the measures $\PR^{(n)}$ satisfy the equivalent conditions (C) and (G).
Then, the laws under $\PR^{(n)}$ of the pairs $(\gamma^{(n)}_\UnitD, W^{(n)})$ on  $X(\overline{\UnitD}) \times \ctsfcns$ are precompact. Furthermore, for any subsequential limit pair, the curve and driving function are, almost surely, Loewner transforms of each other.
 \end{thm}
 
Note that a direct consequence of the above is that if a weak convergence takes place for either $\gamma^{(n)}_\UnitD$ or $ W^{(n)}$, then it also takes place for their joint law.

A careful reader may observe two minor differences to the formulation in~\cite{KS} which are however essentially of presentational nature. First, the assumptions in Section~\ref{subsec: setup} were given following the equivalent form from~\cite[Section~1.4]{KS}, rather than~\cite[Theorem~1.5 and Corollary~1.7]{KS} directly. Second,~\cite[Theorem~1.5 and Corollary~1.7]{KS} were not formulated in product topology. 
Tightness (and hence precompactness) in the product topology however follows directly from the tightness of $\gamma^{(n)}_\UnitD$ and $ W^{(n)}$ individually. 
As regards the the Loewner transform property for the limit in the product topology, the proof of~\cite[Corollary~1.7]{KS} is based on finding, for any $m> 0$, a set of Loewner transformable curves $K_{m}$ that carries a $\PR^{(n)}$-probability at least $1-1/m$, is compact in the topology of both $\gamma^{(n)}_\UnitD$ and $ W^{(n)}$, and on which the two are continuous functions of each other. The limiting objects are thus also supported on $\cup_m K_m$ and the Loewner transform property follows, e.g., by showing with Tietze's extension theorem that $ \EX [ |f($Loewner$(W)) - f(\gamma_\UnitD)| ] = 0$, where $\EX$, $W$, and $\gamma_\UnitD$ are the limit objects $f: X(\overline{\UnitD}) \to \R$ is any bounded  continuous test function.

\subsection{The contribution of the present note}
\label{subsec: Improving Kemppainen's precompactness}
\label{subsubsec: close approximations}

We will need the following notion to formulate our main theorem. Suppose that $(\domain_n; a_n) \to (\domain; a)$ in Carath\'{e}odory sense with respect to $u$ and that radial limits exist at the prime ends $a_n$ and $a$; let us for lighter notation identify $a_n$ and $a$ with the corresponding radial limit points in $\C$. For $r \in ( 0, d (a, u))$, let $S_r$ denote the connected component of $\bdry B(a, r) \cap \domain $ that disconnects the point $a$ from $u$ and is closest to $a$ (such exists by Lemma~\ref{lem: disconnecting arcs} and approximation by radial limits). We say that $a_n$ are \textit{close approximations} of $a$ (see Figure~\ref{fig: bad boundary}) if
$a_n \to a$ as points in $\C$ and for any $r \in ( 0, d (a, u))$ and fixed $w_r \in S_r$, $a_n$ is connected to $w_r$ inside $\domain_n \cap B(a, r)$ for all $n$ a large enough, $n > n_0(r, w_r)$. (By Carath\'{e}odory convergence, it suffices to verify this condition for a given choice of $w_r$.)

Our main theorem concerns the curves $\gamma^{(n)}$ in the original domains, adding the following assumptions to our setup described in Section~\ref{subsec: setup}: 
\begin{itemize}
\item (bounded domains) there exists $M>0$ such that $\domain \subset B(0, M)$ and $\Area (\domain_n ) \leq M$ for all $n$;
\item (Carath\'{e}odory convergence) $ (\domain_n; a_n, b_n) \Cara (\domain; a, b)$, and the conformal maps $\confmapBdry_n: (\domain_n; a_n, b_n) \to (\UnitD; -1, 1)$ and $\confmapBdry: (\domain; a, b) \to (\UnitD; -1, 1)$ normalized at boundary points are chosen so that $\confmapBdry^{-1}_n \to \confmapBdry^{-1}$ uniformly over compact subsets of $\UnitD$; and
\item (close approximations) the prime ends $a_{n}$ and $b_{n}$ in $\domain_n$ have radial limits and are close approximations of the prime ends $a $ and $b$ in $\domain$ with radial limits.
\end{itemize} 

\begin{figure}
\includegraphics[ width=0.25\textwidth]{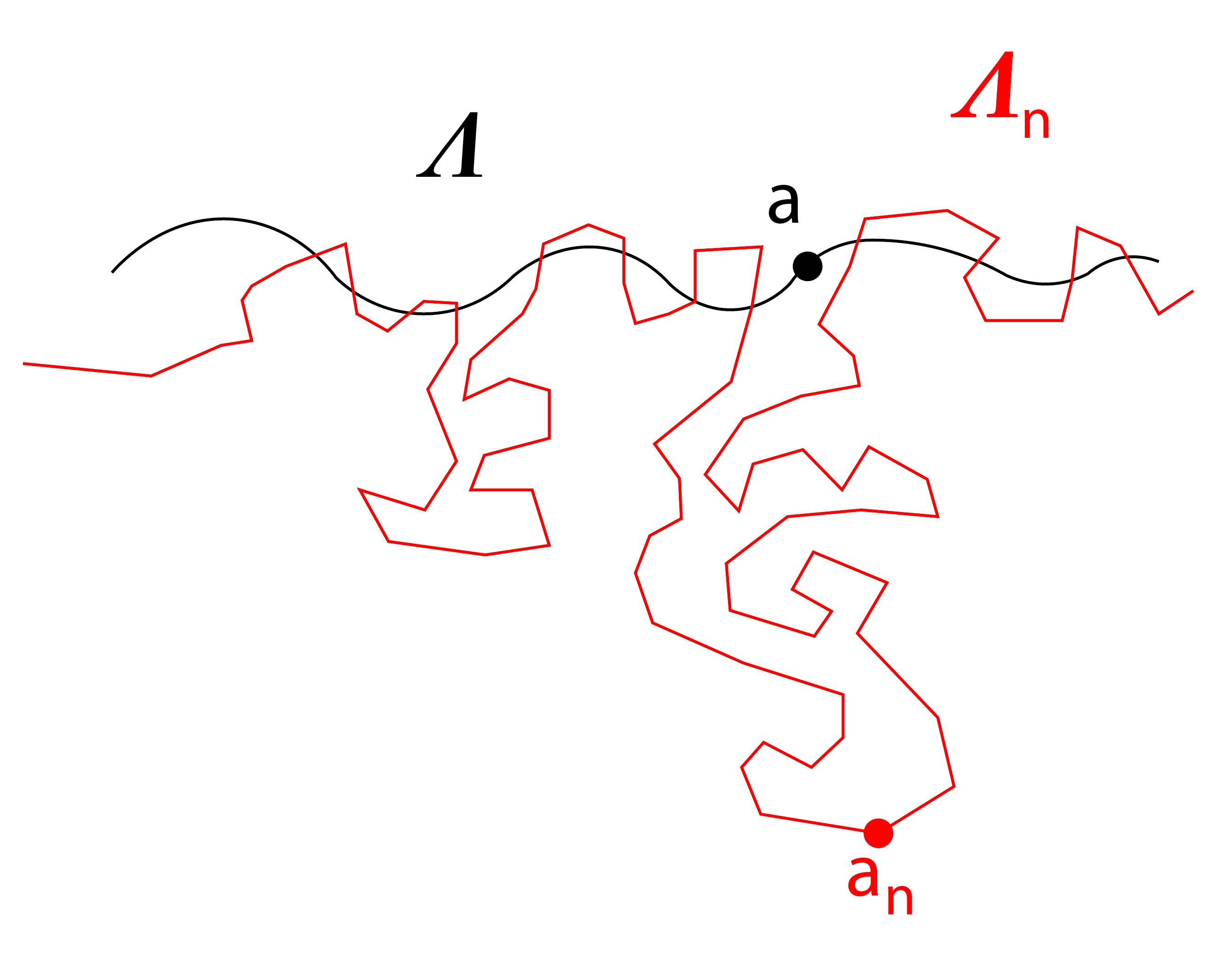} \qquad
\includegraphics[ width=0.25\textwidth]{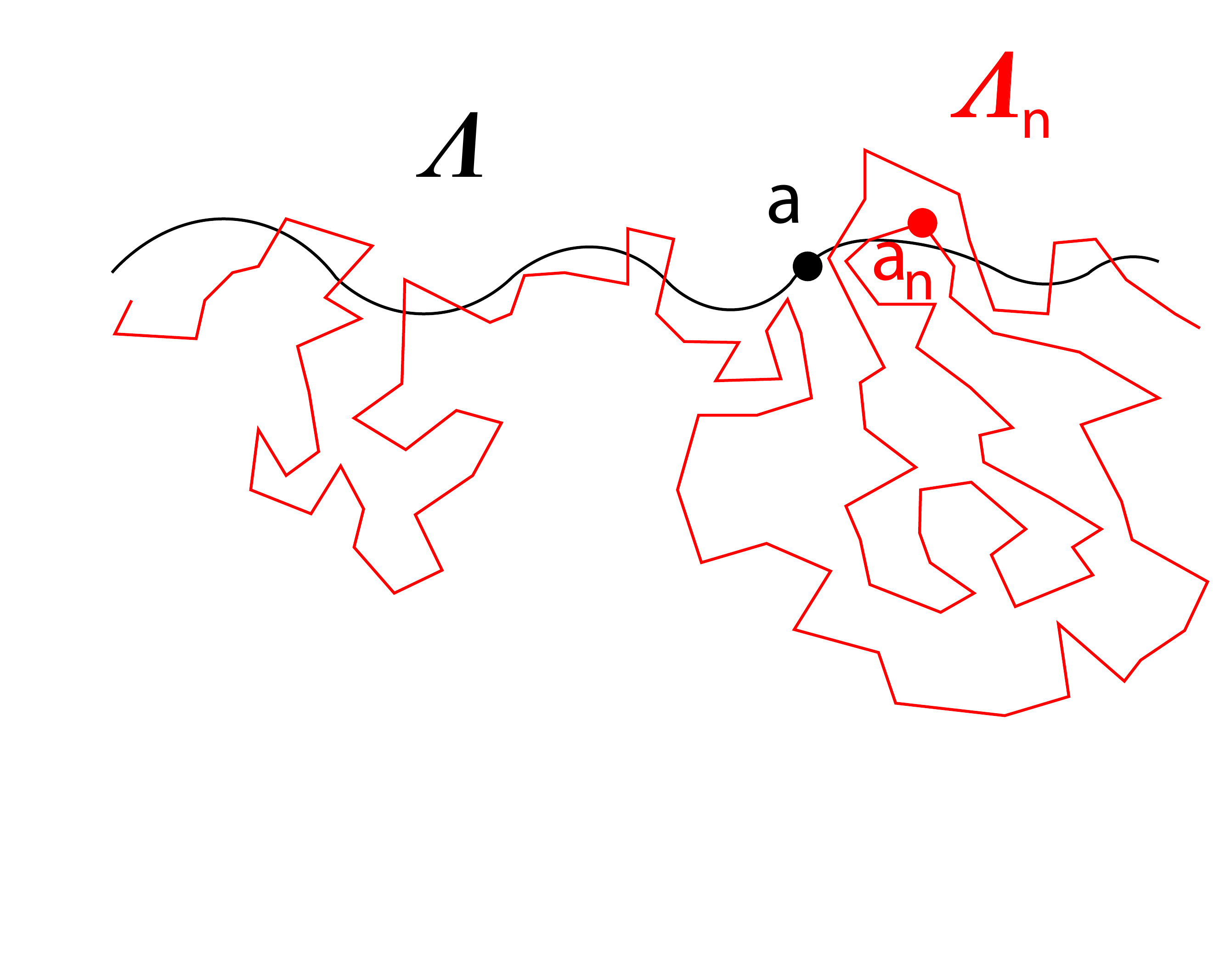}
\caption{\label{fig: bad boundary} Schematic illustrations of undesired behaviour of boundary approximations $(\domain_n; a_n) \to (\domain; a)$. Left: $a_n \not \to a$ as points in $\C$. Right: $a_n \to a$ but $a_n$ are not close approximations.}
\end{figure}

We are now ready to formulate our main theorem. The precompactness part, apart from the behaviour at the marked boundary points, originates from~\cite{AB-regularity_and_dim_bounds_for_random_curves, KS}.

\begin{thm}
\label{thm: extension of Kemppainen's precompactness}
Under the setup given above and in Section~\ref{subsec: setup}, suppose that the measures $\PR^{(n)}$ satisfy the equivalent conditions (C) and (G). Then, the laws of $(\gamma_\UnitD^{(n)}, \gamma^{(n)}) \in X(\overline{\UnitD}) \times X (\C)$ under $\PR^{(n)}$
 are precompact. Furthermore, any subsequential limit $(\gamma_\UnitD, \gamma)$ almost surely satisfies $\gamma = \confmapBdry^{-1} ( \gamma_{\UnitD } )$, where $\confmapBdry^{-1} $ is extended by radial limits.
\end{thm}

Before the proof, let us make some remarks. First, combining Theorems~\ref{thm: Kemppainen's precompactness} and~\ref{thm: extension of Kemppainen's precompactness}, one straightforwardly obtains an analogous result for the triplets $(W^{(n)}, \gamma_\UnitD^{(n)}, \gamma^{(n)})$. Consequently, if weak convergence takes place for either  $ W^{(n)}$, $\gamma^{(n)}_\UnitD$, or $\gamma^{(n)}$, then it also takes place for their joint law, and the limit objects are Loewner/conformal transforms of each other, as detailed in Theorems~\ref{thm: Kemppainen's precompactness} and~\ref{thm: extension of Kemppainen's precompactness}, in particular satisfying the informal ``commutative diagram'' in the Introduction.

Second, as discussed in Section~\ref{subsec:warning}, it is not immediate, but a part of the result that, almost surely, $\confmapBdry^{-1} $ (as extended by radial limits) is defined an all points of $\gamma_{\UnitD }$ and the function $\confmapBdry^{-1} \circ \gamma_{\UnitD }: [0, 1] \to \C$ is continuous (i.e., a curve). Furthermore, this curve is a measurable $X(\C)$ valued random variable in the sigma algebra of $\gamma_{\UnitD }$; this follows from Proposition~\ref{prop: weak limit commutation property}.

It is also an interesting question what regularity properties can be proven for the limit curve $\gamma$ above. The proof of the tightness below boils down to applying~\cite{AB-regularity_and_dim_bounds_for_random_curves} to interior segments of $\gamma^{(n)}$, from the time of first exiting a small neighbourhood of $a_n$ to the time of first hitting a small neighbourhood of $b_n$. The H\"{o}lder regularity and dimension bounds of~\cite{AB-regularity_and_dim_bounds_for_random_curves} thus follow for analogous interior segments of $\gamma$. However, these regularity properties \emph{generally do not hold} for the full curve $\gamma$ without additional boundary regularity assumptions for the domains; an explicit counterexample can be worked out from the idea depicted in Figure~\ref{fig: Peano fjord}.

\begin{figure}
\includegraphics[width=0.4\textwidth]{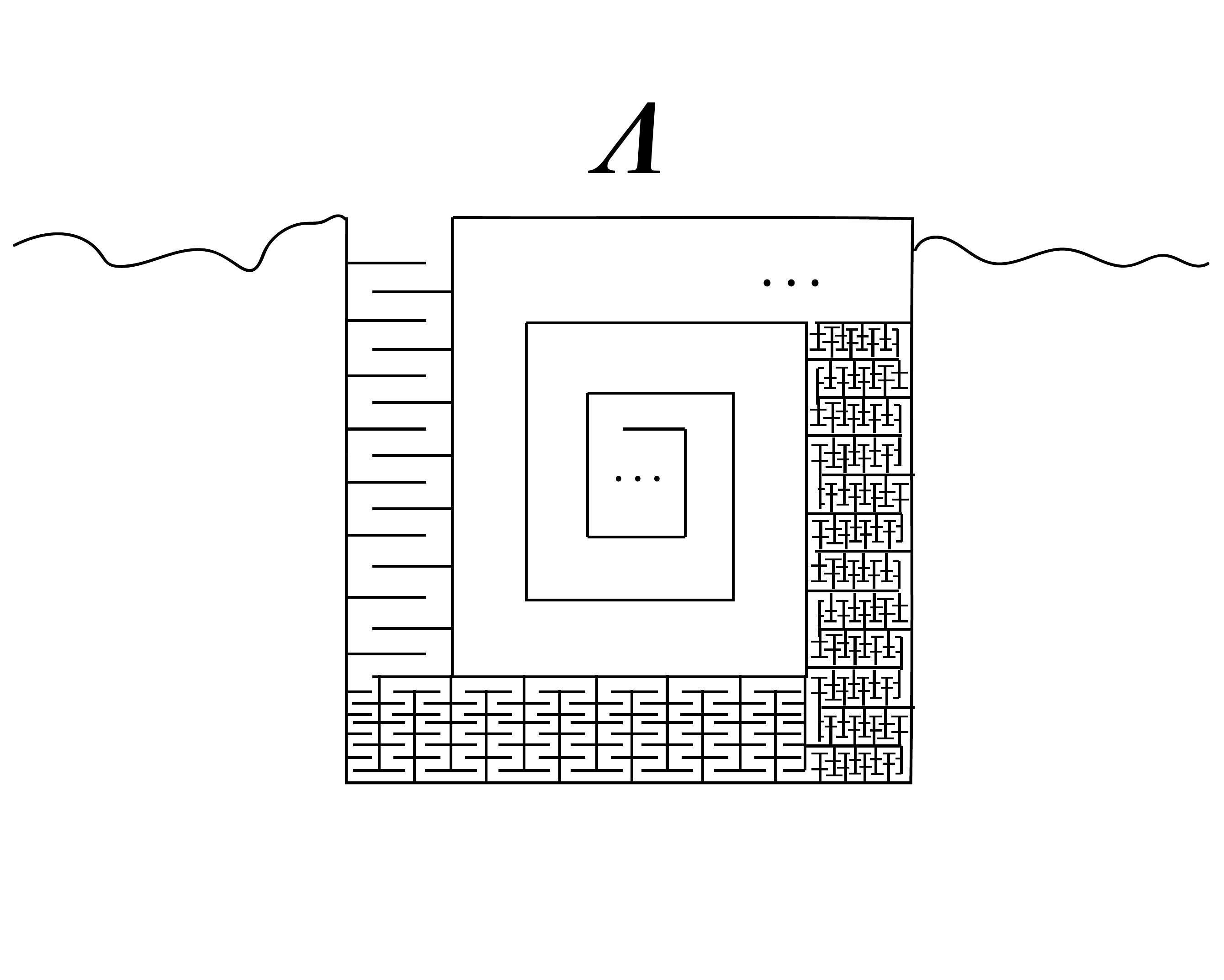}
\caption{
\label{fig: Peano fjord}
A schematic illustration of a counterexample about the regularity of the limit curves $\gamma$ in Theorem~\ref{thm: extension of Kemppainen's precompactness}. If the limiting domain $\domain$ that has on its boundary an ``infinite spiral of corridors with multiple layers of obstacles in each corridor'', then the prime end at the bottom of the spiral has radial limits. However, one can show that if the number of obstacle layers grows fast enough towards the bottom of the spiral (in the picture, the number of obstacle layers is drawn for simplicity equal to the number of the corridor), then any curve reaching the bottom of the spiral and not crossing $\bdry \domain$ must be of upper box dimension two. 
}
\end{figure}

\subsection{Proof of Theorem~\ref{thm: extension of Kemppainen's precompactness}}

\begin{proof}[Proof of precompactness in Theorem~\ref{thm: extension of Kemppainen's precompactness}]
It suffices to show that the laws of $\gamma^{(n)} \in X(\C) $ under $\PR^{(n)}$ are precompact. Indeed (using Prokhorov's theorem twice) then $\gamma^{(n)}$ are tight and, combining with Theorem~\ref{thm: Kemppainen's precompactness}, also $(\gamma_\UnitD^{(n)}, \gamma^{(n)})$ are tight and hence precompact.

Fix $\varrho > 0$ and denote by $S^{(a)}$ the innermost arc of the circle $S(a, \varrho )$ in $\domain$ that disconnects the boundary point $a$ from $u$. Let $w^{(a)} \in S^{(a)}$ be the corresponding reference point, and let $S^{(a)}_n$ be the arc of the circle $S(a, \varrho )$ in $\domain_n$ that contains $w^{(a)}$ (such an arc exists for all large enough $n > n_0 (\varrho)$). Finally, define the similar objects for $b$. Denote by $\gamma^{(n)}_\varrho$ the segment $\gamma^{(n)}([\tau_{S_n^{(a)}}, \tau_{S_n^{(b)}} ])$ of the curve $\gamma^{(n)}$, where $\tau_{S_n^{(a)}}$ and $ \tau_{S_n^{(b)}}$ are the first hitting times of $S_n^{(a)}$ and $S_n^{(b)}$, respectively. For any fixed $\varrho > 0$, the curves $\gamma^{(n)}_\varrho$ are precompact in $X(\C)$ by~\cite[Corollary~1.8]{KS}.

Given $\tilde{\epsilon} > 0$, fix $\chi = \chi(\tilde{\epsilon})$ through property (G) so that an unforced crossing of any fixed boundary annulus $A(z, r, \chi r)$ occurs with $\mathbb{P}^{(n)}$-probability at most $\tilde{\eps}/2$, for any $n$.
We next claim that for any $\varrho, \tilde{\epsilon}$ and $\chi$, there exist  $n_0 = n_0( \varrho, \tilde{\epsilon}, \chi )$ such that for all $n > n_0$,
\begin{align}
\label{eq: lip estimate}
\PR^{(n)} [d (\gamma^{(n)}, \gamma^{(n)}_\varrho) \ge 2 \chi \varrho ] \le \tilde{\eps }.
\end{align}
Indeed, choose $n_0$ through the close approximation property so that for $n > n_0$, $S_n^{(a)}$ exists, $a_n \in B(a, \varrho)$ and $a_n$ is connected to  $S_n^{(a)}$ inside $B(a, \varrho) \cap \domain_n$, and the similar properties hold for $b$. Thus, any crossing of $A(a, \varrho, \chi \varrho)$ before $\tau_{S_n^{(a)}}$ is unforced (for any $\chi > 1$), and likewise is any crossing of $A(b, \varrho, \chi \varrho)$ after $\tau_{S_n^{(b)}}$. If such crossings do not occur, truncating $\gamma^{(n)}$ to $ \gamma^{(n)}_\varrho$ perturbs the curve at most by $2 \chi \varrho$. The choice of $ \chi(\tilde{\epsilon})$ then implies~\eqref{eq: lip estimate}.

The proof can now be finished with a general argument about complete metric spaces. To explain it, recall first that weak convergence of Borel probability measures on metric spaces can be metrized by the L\'{e}vy-Prokhorov metric $d_{LP}$, defined by
\begin{align*}
d_{LP} (\mu, \nu) = \inf \{ \epsilon > 0 : \mu(A) \leq \nu (A^\epsilon) + \epsilon \text{ and } \nu(A) \leq \mu (A^\epsilon) + \epsilon \text{ for all Borel sets } A \},
\end{align*}
when $A^\epsilon$ denotes the open $\epsilon$-neighbourhood of $A$. For instance,~\eqref{eq: lip estimate} directly gives
\begin{align}
\label{eq:expl LP dist}
d_{LP} (\gamma^{(n)}, \gamma^{(n)}_\varrho) \leq \max \{ 2 \chi \varrho, \tilde{\eps } \},
\end{align}
where we (slightly abusively) denoted the distance of laws by the corresponding random variables. The space of probability measures with the L\'{e}vy-Prokhorov metric is in addition complete if the underlying metric space is (as it is here). The general argument (in the notation of a complete L\'{e}vy-Prokhorov space) is the following: a sequence of probability measures $(\mu_n)_{n \in \N}$ is precompact if there exists an array of probability measures  $(\nu_{n,m})_{n, m \in \N}$ such that that $d_{LP} (\mu_n, \nu_{n,m}) \leq 1/m$ for all $n, m$ and that the sequence $(\nu_{n, m})_{n \in \N}$ is precompact for every $m$. Indeed, by diagonal extraction, there is a subsequence $n_k$ such that $(\nu_{n_k, m})_{k \in \N}$ converges weakly for all $m$ and that
\begin{align*}
d_{LP} (\nu_{n_{k},m}, \nu_{n_{k'},m}) < 1/k
\end{align*}
for all $m \leq k \leq k'$. Then,
\begin{align*}
d_{LP} (\mu_{n_k}, \mu_{n_{k'}}) \leq d_{LP} (\mu_{n_k}, \nu_{n_k,m})  + d_{LP} (\nu_{n_{k},m}, \nu_{n_{k'},m})  + d_{LP} ( \nu_{n_{k'},m}, \mu_{n_{k'}}) 
\leq
1/m + 1/k + 1/m
\end{align*}
for all $m \leq k \leq k'$, and choosing $m=k$, we observe that the sequence $(\mu_{n_k})_{k \in \N}$ is Cauchy.

To apply the general argument, let $\mu_n$ be the law of $\gamma^{(n)}$. Given $m$, set $\tilde{\epsilon}=1/m$, then $\chi = \chi(\tilde{\epsilon})$ as above, then $\varrho $ so that $2 \chi \varrho \leq 1/m$, and finally $n_0 = n_0( \varrho, \tilde{\epsilon}, \chi )$ as above. For $n \leq n_0$ let $\nu_{n, m} = \mu_n$ and for $n > n_0$ let $\nu_{n, m}$ be the law of $\gamma^{(n)}_\varrho$. By~\eqref{eq:expl LP dist}, $d_{LP} (\mu_n, \nu_{n,m}) \leq 1/m$ and, as noticed above, the laws $(\nu_{n, m})_{n \in \N}$ of $\gamma^{(n)}_{\varrho(m)}$ are precompact for all $m$. The claim now follows.
\end{proof}

For the proof of the conformal property of Theorem~\ref{thm: extension of Kemppainen's precompactness}, 
the Riemann maps $\confmap_n: \domain_n \to \UnitD$ (i.e. normalized at a fixed common point $u \in \domain_n$) turn out technically nicer; recall that $\confmap_n^{-1}$ converge to the inverse Riemann map $\confmap^{-1}$ of $\domain$ (normalized at $u$). Denote
\begin{align*}
\tilde{\gamma}^{(n)}_{\UnitD} = \confmap_n (\gamma^{(n)}) = \confmap_n \circ \confmapBdry_n^{-1} ( \gamma_\UnitD^{(n)}  );
\end{align*}
note that by the ``easy special case'' of Proposition~\ref{prop: easy special case}, $\tilde{\gamma}^{(n)}_{\UnitD} \to \tilde{\gamma}_{\UnitD}$ weakly if and only if ${\gamma}^{(n)}_{\UnitD} \to {\gamma}_{\UnitD}$ weakly, where $\tilde{\gamma}_{\UnitD} = \confmap \circ \confmapBdry^{-1} ( \gamma_\UnitD  ) $ (since $\confmap_n \circ \confmapBdry_n^{-1}$ and $\confmap \circ \confmapBdry^{-1}$  are M\"{o}bius maps $\overline{\UnitD} \to \overline{\UnitD}$). It is thus equivalent to show that $\gamma = \confmap^{-1} (\tilde{\gamma}_{\UnitD}) $, for any subsequential limit.
Recall also the notation for radial projections from Section~\ref{subsubsec: Cara convergence}. We first prove the following.

\begin{prop}
\label{prop: weak limit commutation property}
Let $(\gamma_\UnitD , \gamma )$ be a limiting pair from Theorem~\ref{thm: extension of Kemppainen's precompactness}, and $\tilde{\gamma}_\UnitD = \confmap \circ \confmapBdry^{-1} (\gamma_\UnitD)$.
For any sequence $\epsilon_{j} \downarrow 0$, there exists a subsequence $(\epsilon_{j_m})_{m \in \N}$ such that, almost surely, $  \confmap^{-1} \circ P_{\epsilon_{j_m}} (\tilde{\gamma}_\UnitD ) \to \gamma$ in $X(\C)$ as $m \uparrow \infty$.
\end{prop}

\begin{proof}
For notational simplicity let us suppress a subsequence notation and assume that $(\gamma_\UnitD^{(n)}, \gamma^{(n)}) \to (\gamma_\UnitD , \gamma )$ weakly. 
By a theorem of Skorokhod, there exists a coupling $\PR$ such that $(\tilde{\gamma}_\UnitD^{(n)}, \gamma^{(n)}) \to (\tilde{\gamma}_\UnitD , \gamma )$ almost surely.
 In this coupling, for any $\epsilon_j =: \epsilon \in (0,1)$ and any $n \in \N$ on the right-hand side
\begin{align}
\nonumber
d (\gamma, \confmap^{-1} \circ P_\epsilon (\tilde{\gamma}_\UnitD ) ) 
 \leq & 
d(\gamma, \gamma^{(n)} ) + d(\gamma^{(n)}, \confmap_n^{-1} \circ P_\epsilon (\tilde{\gamma}^{(n)}_\UnitD ) ) \\
\label{eq:four terms}
&
+ d( \confmap_n^{-1} \circ P_\epsilon (\tilde{\gamma}^{(n)}_\UnitD ), \confmap^{-1} \circ P_\epsilon (\tilde{\gamma}^{(n)}_\UnitD ) )
+ d( \confmap^{-1} \circ P_\epsilon (\tilde{\gamma}^{(n)}_\UnitD ), \confmap^{-1} \circ P_\epsilon (\tilde{\gamma}_\UnitD ) ).
\end{align}
The strategy of the proof is to use~\eqref{eq:four terms} to show that for every $m \in \N$ there exist $ \epsilon_{j_m}$ such that
\begin{align}
\label{eq:BC trick}
\PR \left[  d (\gamma, \confmap^{-1} \circ P_{\epsilon_{j_m}} (\tilde{\gamma}_{\UnitD} ) ) \geq  1/2^m \right] \leq 1/2^m.
\end{align}
Indeed, by~\eqref{eq:BC trick} and the Borel--Cantelli lemma, the events $\{ d (\gamma, \confmap^{-1} \circ P_{\epsilon_{j_m}} (\tilde{\gamma}_\UnitD ) ) \geq  1/2^m \}$ almost surely occur for only finitely many $m$, and the claim is proven.

We now analyze the terms on the right-hand side of~\eqref{eq:four terms} separately. For the first one, almost sure convergence implies convergence in probability, so
\begin{align}
\label{eq:term 1}
\PR & [ d(\gamma, \gamma^{(n)} ) \geq 1/2^{m+2} ] \leq 1/2^{m+2} \quad
\text{for all } n > n_0^{(1)},
\end{align}
where $n_0^{(1)}$ only depends on the coupling $\PR$.
For the third one, pick $n_0^{(3)} (\epsilon)$ by the convergence $\domain_n \Cara \domain$ so that $| \confmap_n^{-1} (\cdot) - \confmap^{-1} (\cdot) | < 1/2^{m+2}$ on $\overline{B(0, 1-\eps)}$, for all $n > n_0^{(3)} (\epsilon)$. Thus, deterministically,
\begin{align}
\label{eq:term 2}
d( \confmap_n^{-1} \circ P_\epsilon (\tilde{\gamma}^{(n)}_\UnitD ), \confmap^{-1} \circ P_\epsilon (\tilde{\gamma}^{(n)}_\UnitD ) ) <  1/2^{m+2} \qquad \text{for all } n > n_0^{(3)} (\epsilon).
\end{align}
For the fourth term, note first that $P_\epsilon$ satisfies $|P_\epsilon(z) - P_\epsilon (z') | \leq |z-z'|$ for all $z, z' \in \overline{\UnitD}$. Denoting $D_\epsilon = \sup_{z \in B(0, 1-\epsilon)} | ( \confmap^{-1} )' (z)|$ one then deduces that, deterministically,
\begin{align*}
d( \confmap^{-1} \circ P_\epsilon (\tilde{\gamma}^{(n)}_\UnitD ), \confmap^{-1} \circ P_\epsilon (\tilde{\gamma}_\UnitD ) ) \leq D_\epsilon d( \tilde{\gamma}^{(n)}_\UnitD, \tilde{\gamma}_\UnitD ).
\end{align*}
Using the convergence $\tilde{\gamma}^{(n)}_\UnitD \to \tilde{\gamma}_\UnitD$ in probability and taking $n > n_0^{(4)}(\epsilon)$ so that $\PR [ d( \tilde{\gamma}^{(n)}_\UnitD, \tilde{\gamma}_\UnitD ) \geq 1/(D_\epsilon 2^{m+2}) ] \leq 1/2^{m+2} $, we obtain
\begin{align}
\label{eq:term 3}
\PR [d( \confmap^{-1} \circ P_\epsilon (\tilde{\gamma}^{(n)}_\UnitD ), \confmap^{-1} \circ P_\epsilon (\tilde{\gamma}_\UnitD ) ) \geq 1/2^{m+2}] \leq 1/ 2^{m+2} \qquad \text{for all } n > n_0^{(3)} (\epsilon).
\end{align}

Note that the three terms above could be treated with simple and general arguments, not even requiring $\epsilon$ to be small. The core of the proof is thus the second term, which we formulate as a separate result and prove in Section~\ref{sec: pf of key lemma}.
\begin{keylem}
\label{lem: key lemma 1st statement}
In the setup of Theorem~\ref{thm: extension of Kemppainen's precompactness}, for any $\tilde{\eps} > 0$, there exists $\eps_0( \tilde{\epsilon}) > 0$ such that if $\eps < \eps_0$ and $n > n_0(\eps)$, then 
\begin{align*}
\PR^{(n)} [ d ( \confmap_n^{-1}(x) , \confmap_n^{-1} \circ P_{\eps} ( x  ) ) > \tilde{\eps} \text{ for some }x \in \tilde{\gamma}^{(n)}_\UnitD  ] \le \tilde{\eps}.
\end{align*}
\end{keylem}
In particular, setting $\tilde{\epsilon} = 1/2^{m+2}$, we have
\begin{align}
\label{eq:term 4}
\PR & [d(\gamma^{(n)}, \confmap_n^{-1} \circ P_\epsilon (\tilde{\gamma}^{(n)}_\UnitD ) ) \geq 1/2^{m+2}] \leq 1/ 2^{m+2} \\ 
\nonumber
& \text{by first taking $\epsilon < \epsilon_0 (m)$ small enough and then $n > n_0^{(2)} (\epsilon)$ large enough} .
\end{align}
Combining~\eqref{eq:term 1}--\eqref{eq:term 4} we observe that taking first $\epsilon_{j_m} < \epsilon_0 (m)$ and then $n > \max_{1 \leq k \leq 4} n_0^{(k)} (\epsilon_{j_m})$,~\eqref{eq:BC trick} is guaranteed.
\end{proof}

\begin{proof}[Proof of the conformal property of limits in Theorem~\ref{thm: extension of Kemppainen's precompactness}]
As noted above, it is equivalent to show that, almost surely, $\gamma = \confmap^{-1} ( \tilde{\gamma}_\UnitD)$.
Proposition~\ref{prop: weak limit commutation property} directly finishes the proof if the limiting domain $\domain$ is regular enough so that the conformal map $\confmapD$ extends continuously to the closed unit disc $\overline{\UnitD}$; this continuity implies that a curve $\lim_{\epsilon \downarrow 0} \confmap^{-1} \circ P_\eps ( \tilde{\gamma}_\UnitD )$ exists.
(Note that this argument did not impose any requirements for the approximating domains $\domain_n$.)
For more irregular limiting domains, the proof is concluded from Proposition~\ref{prop: weak limit commutation property} by Lemma~\ref{lem:det curve conv} below, which is a statement about deterministic curves.
\end{proof}

In the following, we mean by \textit{non-self-crossing curves} the $X(\C)$-closure of simple curves. Note that, by the Portmanteau theorem, any weak limit of probability measures supported on non-self-crossing curves curves is also supported on non-self-crossing curves. In particular, this is the case for any weak limit $\gamma_\UnitD$ in our setup, since $\gamma^{(n)}_\UnitD$ are Loewner transformable and hence non-self-crossing.

\begin{lem}
\label{lem:det curve conv}
Let $\domain$ be a bounded simply-connected domain, $\confmap$ its Riemann map normalized at $u \in \domain$, and $a, b$ two distinct prime ends with radial limits.
Let $\tilde{\eta}_\UnitD$ be a non-self-crossing 
curve in $\overline{ \UnitD}$ with $\tilde{\eta}_\UnitD (0) = \confmap( a)$ and  $\tilde{\eta}_\UnitD (1) = \confmap(b) $ (in the sense of the prime end bijection). Suppose that for some sequence $\epsilon_m \downarrow 0$, we have $\confmap^{-1} \circ P_{\epsilon_m} ( \tilde{\eta}_\UnitD ) \to \eta$ in $X(\C)$ as $m \to \infty$. Then, $\confmap^{-1}$, as extended by radial limits, is defined on all points of $\tilde{\eta}_\UnitD$ and 
\begin{align*}
\confmap^{-1} \circ P_\eps ( \tilde{\eta}_\UnitD (t) ) \stackrel{\eps \downarrow 0}{ \longrightarrow }  \confmap^{-1}  ( \tilde{\eta}_\UnitD (t) )
\quad
\text{uniformly over $t \in [0,1]$.} 
\end{align*}
 In particular, $ \confmap^{-1}  ( \tilde{\eta}_\UnitD ) $ is one parametrized representative of $\eta \in X(\C)$ and the uniform converge $\confmap^{-1} \circ P_{\epsilon} ( \tilde{\eta}_\UnitD ) \to \confmap^{-1}  ( \tilde{\eta}_\UnitD )$ takes place as $\epsilon \downarrow 0$ (not only along a special sequence) and in this particular parametrization.
\end{lem}

\begin{proof}
It suffices to show that the complex numbers $\confmap^{-1} \circ P_\eps ( \tilde{\eta}_\UnitD (t) )$ converge uniformly over $t \in [0,1]$ to some limits as $\eps \downarrow 0$.
Assume for a contradiction that this is not the case, and denote
\begin{align*}
\varrho (r, t) = \{  \confmap^{-1}  ( P_\eps ( \tilde{ \eta }_\UnitD (t ) ) ) , \eps \in (0, r) \} \subset \C.
\end{align*}
The counter assumption means that for some $\ell > 0$ there exists a sequence $r_J \downarrow 0$ and times $t_J \in [0, 1]$ such that 
\begin{align*}
\diam ( \varrho (r_J, t_J) ) > \ell.
\end{align*}
Denote $\theta(t) = \arg ( \tilde{ \eta }_{\UnitD } (t) ) \in [0, 2 \pi) $; by compactness we may assume that $t_J$ are chosen so that $e^{i\theta(t_J)} \to e^{i \theta}$, and that the sequence $\theta(t_J)$ is monotonous. 

Fix $J$ and suppress the indexing for a moment, $(r,t):=(r_J, t_J)$.
Let $x$ and $y$ be two points on $\varrho (r, t)$ such that $d(x, y ) > \ell / 2$, $y$ corresponding to a smaller value of $\eps= \eps^{(y)}$ in $y= P_{\eps^{(y)}} ( \tilde{ \eta }_\UnitD (t ) )$.
By Proposition~\ref{prop: conformal rays lie in fjords} from Section~\ref{sec: pf of key lemma}, there exists a quantity $\rho=\rho(r)$ (independent of $t$), $\rho \to 0$ as $r \to 0$, so that the following holds for all $r$ small enough: the innermost connected component $S^{(x)}$ of the circle $S ( x, \rho)$ that separates $x$ from the normalization point $u$, actually separates from $u$ the remainder of the whole ``conformal ray'' (i.e., a $\confmap^{-1}$ image of a radial line segment) from $x$ onwards; and the analogous statement holds with $x$ replaced by $y$ for the same $\rho$. Assume that $r$ is small enough so that $\rho < \ell /8$; hence $S^{(x)}$ disconnects $S^{(y)}$ from $u$. Changing the choice of $\rho=\rho(r)$ infinitesimally if needed, we will also later assume that neither $\confmap (S^{(x)})$ nor $\confmap (S^{(y)})$ has $e^{i \theta}$ as an end point on $\bdry \UnitD$.

We next claim that for $r<r_0$ small enough, where $r_0$ only depends on the domain $(\domain; a, b)$, the limit curve $\eta$ must contain a subcurve $\eta([t_1, t_2])$ between $\overline{S^{(x)}}$ and $\overline{S^{(y)}}$. (Note that such a subcurve necessarily has a diameter $\ge \ell / 4$.)
To work towards this observation, note that $S^{(x)}$ and $S^{(y)}$ divide $\domain$ into three components: the component $C_u$ of $u$ in $\domain \setminus S^{(x)}$, the boundary quad $Q$ between  $S^{(x)}$ and $S^{(y)}$, and the component $C_{y}$ of $\domain \setminus S^{(y)}$ containing $y$ (and not containing $u$ nor $S^{(x)}$).
By the construction of $S^{(y)}$, the points $P_{\eps} ( \tilde{ \eta }_\UnitD (t ) )$ with $ \eps \leq \eps^{(y)}$ (in particular along the special sequence $\epsilon_m \downarrow 0$) all belong to $C_y$. On the other hand, for $r$ and hence $\rho(r)$ small enough, at least one out of $a$ and $b$ must lie on the boundary $C_u$  (otherwise, taking $\rho \to 0$, one boundary arc between $a$ and $b$ would have an arbitrarily small harmonic measure as seen from $u$). Hence for all $m$ large enough, $\confmap^{-1} \circ P_{\epsilon_m} ( \tilde{\eta}_\UnitD )$ is a curve in $\domain$ that visits both $C_y$ and $C_u$, in particular crossing between $\overline{S^{(x)}}$ and $\overline{S^{(y)}}$. The existence of such a crossing carries over to the limit curve $\eta$.

We now go back to the sequence $(r_J, t_J)$, constructing a certain subsequence $J_k$. Let $S^{(x)}_k$ and $S^{(y)}_k$ be the circle arcs above for $(r,t):=(r_{J_k}, t_{J_k})$.\footnote{We should denote here $x=x_k$ and $y=y_k$; the exact choice of these points is however unimportant in what follows, apart from $x$ lying closer to $u$ on the same ``conformal ray''.}
We define $J_{k}$ inductively so that the cross cut $\confmap (S^{(x)}_{k})$ separates $\confmap (S^{(y)}_{k})$ in $\UnitD$ from  all the previous ones, i.e., $\confmap (S^{(x)}_{1}), \ldots, \confmap (S^{(x)}_{k-1})$ and $\confmap (S^{(y)}_{1}), \ldots, \confmap (S^{(y)}_{k-1})$. This is possible since the previous cross cuts avoid $e^{i \theta}$, while an easy harmonic measure argument shows that for $J_k$ large enough, entire $\confmap (S^{(x)}_{k})$ can be made arbitrarily close to $e^{i \theta}$.
As observed above, the limit curve $\eta$ must contain a subcurve $\eta([t^{(k)}_1, t^{(k)}_2])$ from $\overline{S^{(x)}_{k}}$ to $\overline{S^{(y)}_{k}}$, hence with diameter $\ge \ell / 4$. We claim that the time intervals $[t^{(k)}_1, t^{(k)}_2]$ can be chosen disjoint. Proving this claim true contradicts the continuity of $\eta$ and finishes the entire proof.

To construct the disjoint time intervals, we consider the following three cases, illustrated in \ref{fig: three cases}:
\begin{itemize}
\item infinitely many of the boundary quads $Q_n$ in $\domain$ determined by the two cross cuts $S^{(x)}_{n}$ and $S^{(y)}_{n}$ contain either $a$ or $b$;
\item infinitely many of the boundary quads contain neither $a$ nor $b$ and disconnect $a$ from $b$; and
\item infinitely many of the boundary quads contain neither $a$ nor $b$ and do not disconnect $a$ from $b$.
\end{itemize}
At least one of these occurs. Let us denote the corresponding infinitely many quadrilaterals  by $Q_1, Q_2, \ldots$, suppressing a subsequence notation.
In the first case, assume that infinitely many quads contain $b$ (if not, reverse the curve $\eta$). Note that the quadrilaterals are nested by the monotonicity of $\theta(t_J)$, $Q_1 \supset Q_2 \supset \ldots$ and that the inner arcs $S^{(y)}_k$ ``move towards $b$'', see Figure~\ref{fig: three cases}. Now, an initial segment of $\eta$ up to the end of any crossing of $Q_k$ from $S^{(x)}_{k}$  to $S^{(y)}_{k}$ either contains no crossing the next quadrilateral $Q_{k+1}$ from $S^{(x)}_{k+1}$  to $S^{(y)}_{k+1}$, or the rest of $\eta$ is topologically forced to make such a crossing again (due to the non-self-crossing property). In any case, we can thus find disjoint crossings of the quads  in their order. The second case is trivial: the quadrilaterals are then by construction disjoint, and so are their crossings. In the third case, assume for definiteness that the boundary points $e^{i\theta(t_{J_k})}$ corresponding to the quadrilaterals move farther from $\confmap(a)$ and towards $\confmap(b)$ (if not, reverse $\eta$). The proof is then identical to the first case. This completes the proof.
\begin{figure}
\makebox[\textwidth][c]{
\includegraphics[ width=0.3\textwidth]{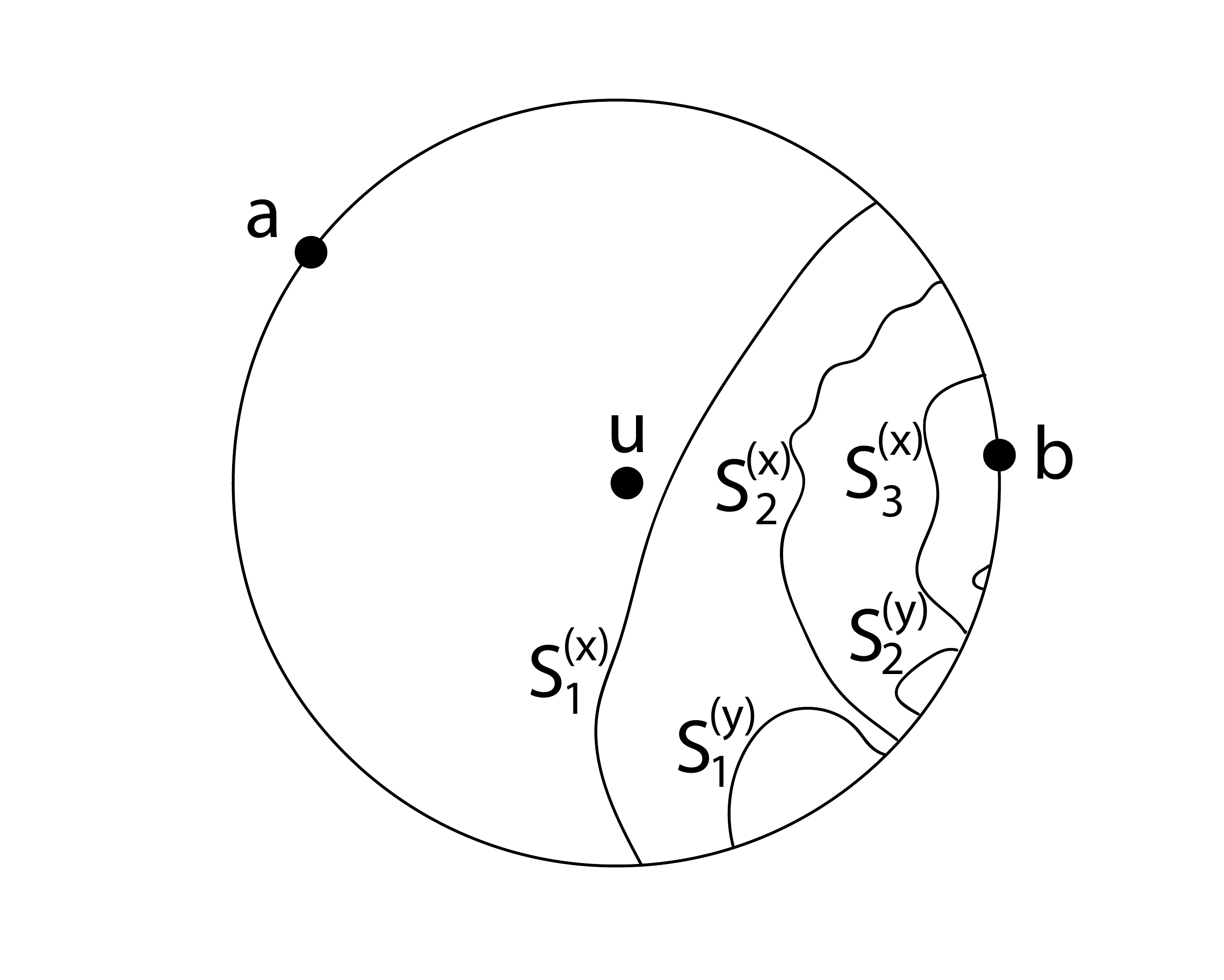}%
\includegraphics[ width=0.3\textwidth]{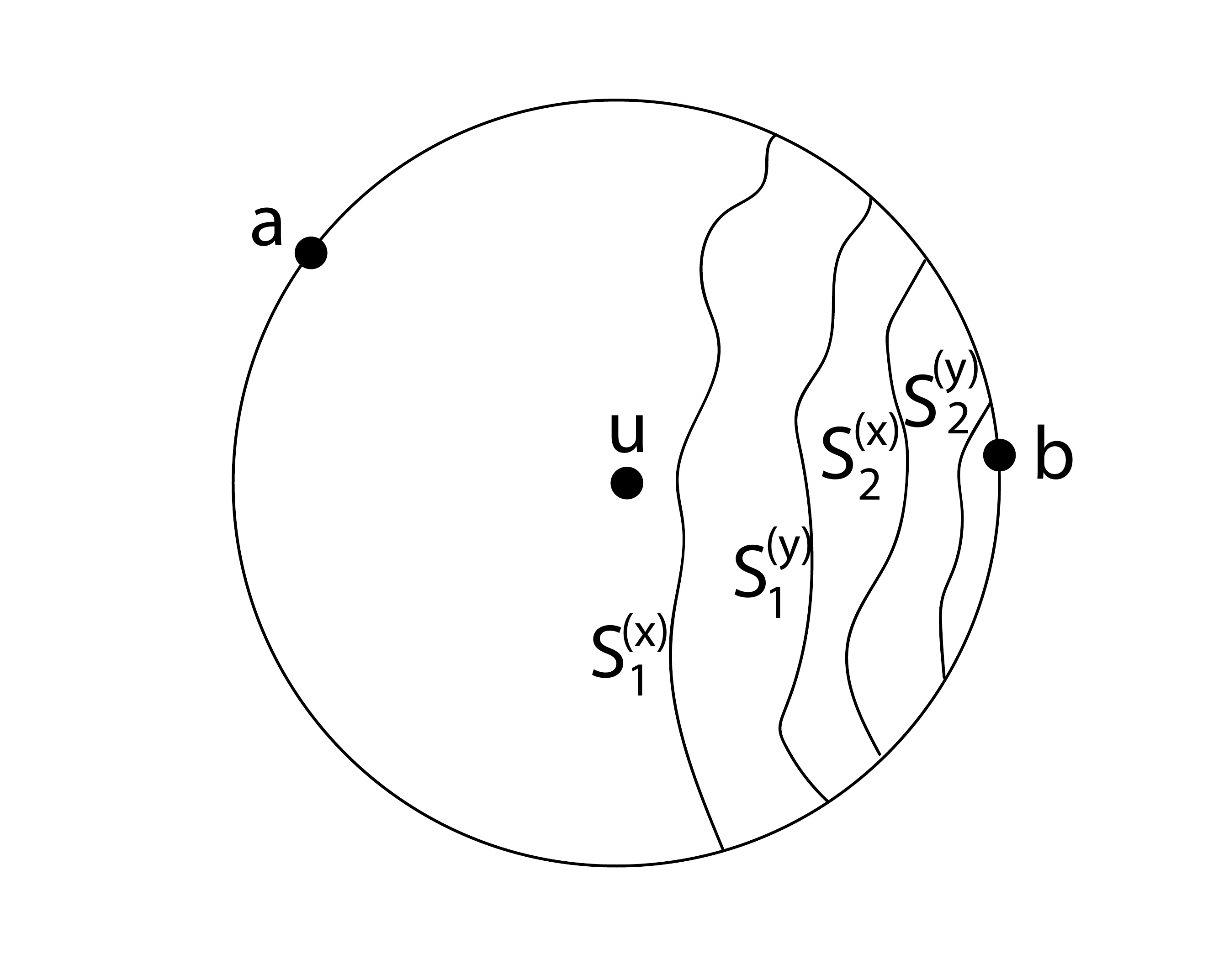}%
\includegraphics[ width=0.3\textwidth]{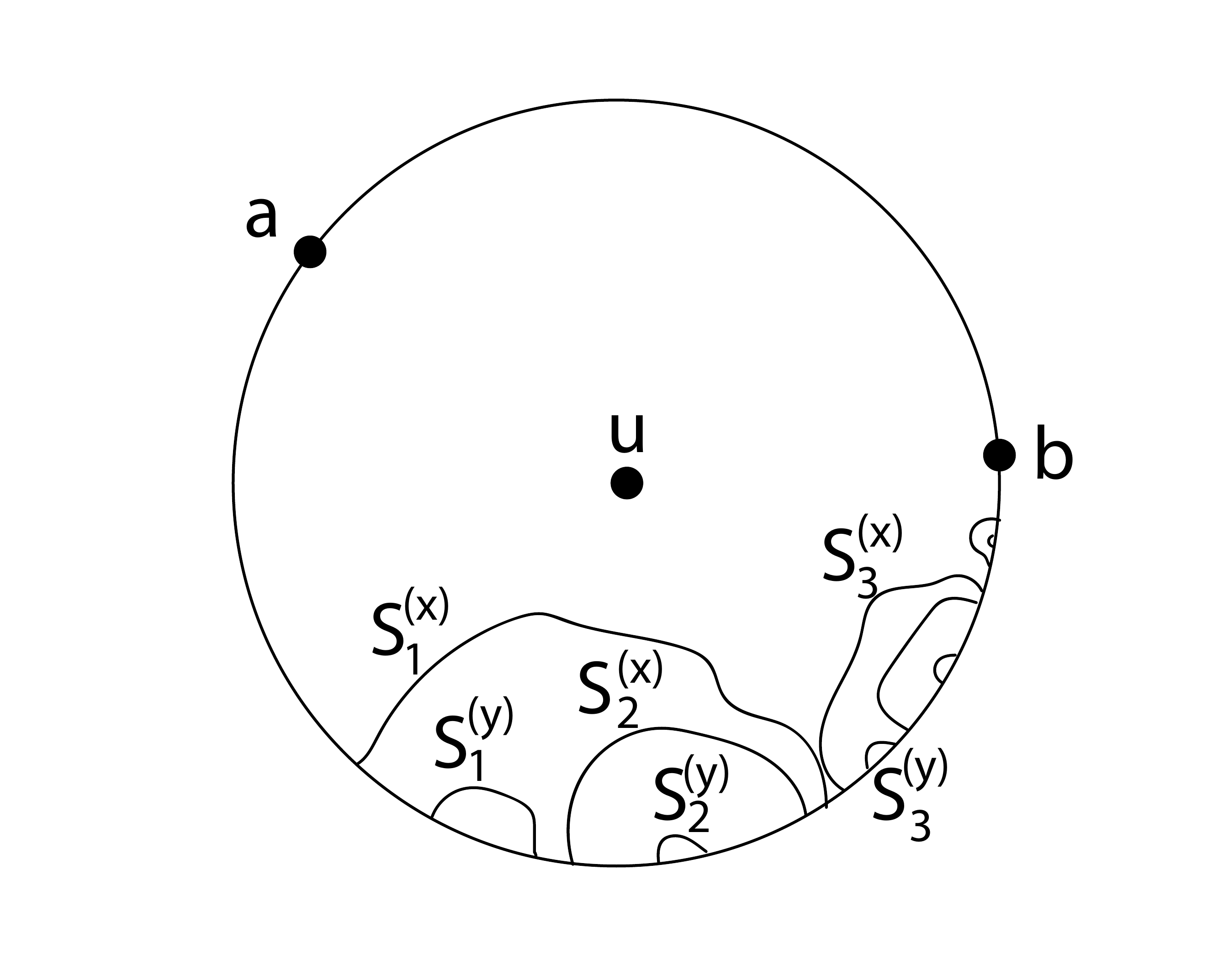}
}
\caption{\label{fig: three cases} Schematic illustrations of the prime ends $a$ and $b$, the point $u$, and the collection of cross cuts $S^{(x)}_n, S^{(y)}_n$, as mapped conformally to $\UnitD$ by $\confmap$ (the notations $\confmap$ are omitted in the figure), in the different cases itemized in the end of the proof of Proposition~\ref{lem:det curve conv}. }
\end{figure}
\end{proof}

\subsection{Parametrized curves}

For a pair $( \gamma_\UnitD, W) \in X(\overline{\UnitD}) \times \ctsfcns $ of Loewner transforms, we say that $\gamma_\UnitD$ is \textit{capacity-parametrized} if $\gamma_\UnitD (t) = \confmapHD (\gamma_{\bH} ( \frac{t}{1-t} ))$ for $t \in [0,1)$, where $\gamma_{\bH} $ is the map generating the hulls of $W$. In this subsection, we assume that $\gamma_\UnitD^{(n)}$ and $\gamma_\UnitD$ are capacity-parametrized and equip the space $\ctsfcns([0,1], \C)$ of continuous functions $[0,1] \to \C$ by the usual sup norm. One (simple extension of a) result in~\cite{KS} is that Theorem~\ref{thm: Kemppainen's precompactness} also holds in the (stronger) topology of $\ctsfcns([0,1], \C ) \times \ctsfcns$. 

\begin{thm}
\label{thm:parametrized}
Suppose that $\gamma_\UnitD^{(n)}$ and $\gamma_\UnitD$ are capacity-parametrized and $\gamma^{(n)}$ are parametrized so that $\gamma^{(n)}(\cdot)=\confmapBdry^{-1}_n (\gamma^{(n)}_\UnitD (\cdot))$. Then, the statement of Theorem~\ref{thm: extension of Kemppainen's precompactness} holds in the topology $\ctsfcns([0,1], \C ) \times \ctsfcns([0,1], \C ) $.
\end{thm}

\begin{proof}
By Theorem~\ref{thm: extension of Kemppainen's precompactness} and~\cite{KS}, $(\gamma_\UnitD^{(n)}, \gamma^{(n)})$ are tight in $\ctsfcns([0,1], \C ) \times X (\C ) $. Suppose hence that $(\gamma_\UnitD^{(n)}, \gamma^{(n)}) \to (\gamma_\UnitD, \gamma)$ weakly in $\ctsfcns([0,1], \C ) \times X (\C ) $. We will show that there exists a subsequence $(n_m)$ such that also $(\gamma_\UnitD^{(n_m)}, \gamma^{(n_m)}) \to (\gamma_\UnitD, \confmapBdry^{-1} (\gamma_\UnitD ))$ in $\ctsfcns([0,1], \C ) \times \ctsfcns([0,1], \C ) $; this readily implies the entire analogue of Theorem~\ref{thm: extension of Kemppainen's precompactness} in $\ctsfcns([0,1], \C ) \times \ctsfcns([0,1], \C ) $.

Similarly to the proof of Theorem~\ref{thm: extension of Kemppainen's precompactness}, embed everything in one probability space so that $(\gamma_\UnitD^{(n)}, \gamma^{(n)}) \to (\gamma_\UnitD, \gamma)$ in $\ctsfcns([0,1], \C ) \times X (\C ) $, almost surely. Now, we claim that, for all $m \in \N$, one can choose $n_m$ so that, in this coupling
\begin{align*}
\bbP [d (\gamma^{(n_m)}, &\confmap^{-1} (\tilde{\gamma_\UnitD})) \geq 1/2^m] \leq 1/2^m,
\end{align*}
where the distance is in $\ctsfcns([0,1], \C )$ and we remind the reader that $\tilde{\gamma}_{\UnitD} = \confmap \circ \confmapBdry^{-1} ( \gamma_\UnitD  ) $.
The desired weak convergence along subsequence $n_m$ then readily follows by the Borel--Cantelli lemma.

To prove the above claim, compute with distances in $\ctsfcns([0,1], \C )$:
\begin{align*}
d (\gamma^{(n)}, \confmap^{-1} (\tilde{\gamma_\UnitD}))
\leq &
d(\gamma^{(n)}, \confmap_n^{-1} \circ P_{\epsilon} (\tilde{\gamma_\UnitD}^{(n)}))
+ 
d( \confmap_n^{-1} \circ P_{\epsilon} (\tilde{\gamma_\UnitD}^{(n)}), \confmap^{-1} \circ P_{\epsilon} (\tilde{\gamma_\UnitD}^{(n)}))
\\ & +
d(\confmap^{-1} \circ P_{\epsilon} (\tilde{\gamma_\UnitD}^{(n)}), 
\confmap^{-1} \circ P_{\epsilon} (\tilde{\gamma_\UnitD}))
+ 
d(\confmap^{-1} \circ P_{\epsilon} (\tilde{\gamma_\UnitD}), \confmap^{-1} (\tilde{\gamma_\UnitD})).
\end{align*}
The rest of the proof is similar to the proof Proposition~\ref{prop: weak limit commutation property} and we only outline the idea: the three first terms on the right-hand side can be handled identically to that proof, choosing for each $m$ first $\epsilon < \epsilon_0 (m)$ and then $n > n_0 (\epsilon)$. For the fourth one, the almost sure convergence $(\tilde{\gamma}_\UnitD^{(n)}, \gamma^{(n)}) \to (\tilde{\gamma}_\UnitD, \gamma)$ in $\ctsfcns([0,1], \C ) \times X (\C ) $ implies the analogous weak convergence in $X(\overline{\UnitD}) \times X (\C )$; then by Proposition~\ref{prop: weak limit commutation property} and Lemma~\ref{lem:det curve conv} (which directly addresses parametrized curves), $\confmap^{-1} \circ P_{\epsilon} (\tilde{\gamma_\UnitD}) \stackrel{ \epsilon \downarrow 0}{\longrightarrow} \confmap^{-1} (\tilde{\gamma_\UnitD})$ in $\ctsfcns([0,1], \C )$, almost surely and thus also in probability.
\end{proof}

\section{An application to SLE}
\label{sec: applications}

In this section, we demonstrate the use of Theorem~\ref{thm: extension of Kemppainen's precompactness} by proving that the chordal $\SLE(\kappa)$ (as a random curve in $X(\C)$) is stable both in the domain and in the parameter $\kappa \in [0,8)$. This can be seen as an analogue of the SLE application in~\cite{KS}. Applying Theorem~\ref{thm:parametrized} instead, the analogues of the results in this section could be given for capacity-parametrized curves.

\subsection{SLE in the topology of curves}
\label{subsec:SLE as rand curve}

Recall from Section~\ref{subsubsec:rand growth proc} the definition of SLE as a random Loewner growth processes.
By the celebrated results of~\cite{RS-basic_properties_of_SLE}, the hulls of an $\SLE (\kappa)$ growth process with $\kappa \in [0,8)$ are almost surely generated by a continuous map $\gamma_\bbH$ and $\gamma_{\bH} (t) \stackrel{t \to \infty}{\longrightarrow} \infty$. For such $\gamma_{\bH}$ there exists also a corresponding curve $\gamma_\UnitD $ in $\overline{\UnitD}$ (see Section~\ref{subsubsec:rand growth proc}).

\begin{prop}
\label{prop: SLE in D is measurable}
Fix $\kappa \in [0,8)$. The SLE curve $\gamma_\UnitD $ defined above is almost surely equal to an $X(\overline{ \UnitD })$-valued random variable, measurable in the sigma algebra of the Brownian motion that defines the SLE.
\end{prop}

\begin{proof}[Proof sketch.]
Note first that condition (C) can be formulated only in terms of hulls and stopping times of the Loewner process. It thus makes sense to claim that $\gamma_\UnitD$ satisfies condition (C), even if do not know if it exists as a $X(\overline{ \UnitD })$-valued random variable. Condition (C) is then verified in~\cite[Theorem~1.10]{KS}.\footnote{Prior to~\cite{KS}, SLE semicircle intersection probabilities were studied in~\cite{AK-SLE_intersection}. Condition (C) alternatively follows from this result and the relation of annulus and quad crossings, see~\cite[Proof of ``G2$\Rightarrow$C2'']{KS}.}
Finally, arguing identically to the proof of~\cite[Theorem~1.7]{KS}, condition (C) implies that for any $\eps > 0$, there is a set $K_\eps$ in the space of curves with a Loewner transform, compact both in the topology of the driving functions $W \in \ctsfcns $ and the curves $\gamma_\UnitD \in X(\C)$, and carrying probability mass $\PR [ K_\eps ] \ge 1 - \eps$, and on which $\gamma_\UnitD \in X(\C)$ is a continuous function of $W \in \ctsfcns $. The claim follows.
\end{proof}

The next result yields an alternative proof of~\cite[Theorem~1.1]{GRS-SLE_in_sc_domains}, in addition addressing the behaviour of the curve at the end points and its measurability as a random variable.

\begin{prop}
\label{prop: SLE in domain is measurable}
Let $\domain$ be a bounded simply-connected domain, $a, b$ its prime ends with radial limits, $\confmapBdry$ a conformal map $(\domain; a, b) \to (\UnitD; -1, 1)$, and $\gamma_\UnitD $ an $\SLE(\kappa)$ curve on $(\UnitD; -1, 1)$ with $\kappa \in [0,8)$. There exists an $X(\C)$-valued random variable $\gamma$, measurable with respect to the sigma algebra of $\gamma_\UnitD $ (and hence also that of the underlying Brownian motion) such that, almost surely, $\gamma = \confmapBdry^{-1} (\gamma_\UnitD  )$, where $ \confmapBdry^{-1}$ is extended by radial limits.
\end{prop}

\begin{proof}
Set $u = \confmapBdry^{-1} (0) \in \domain$; rotating $\domain$, we may assume that $\confmapBdry$ is the Riemann map normalized at $u$. Let $\domain_n$ be the simply-connected domain bounded by the simple loop on $\frac{1}{n} \Z^2$ that encloses $u$ and a maximal number of squares of $\frac{1}{n} \Z^2$, let $\confmapBdry_n$ be their Riemann maps normalized at $u$, and $a_n = \confmapBdry_n^{-1}(-1)$, $b_n = \confmapBdry_n^{-1}(1)$; by polygonality, $\confmapBdry_n^{-1}$ extend continuously to $\overline{\UnitD}$ and the curves $\confmapBdry_n^{-1} (\gamma_\UnitD ) = \gamma^{(n)} \in X(\C)$ are thus measurable random variables in the sigma algebra of $\gamma_\UnitD $. They
satisfy Condition (C) by the argument in the proof of Proposition~\ref{prop: SLE in D is measurable}, and the remaining assumptions of Theorem~\ref{thm: extension of Kemppainen's precompactness} are readily verified. Hence, $(\gamma_\UnitD, \gamma^{(n)})$ converge weakly to $(\gamma_\UnitD, \gamma)$, where $\gamma = \confmapBdry^{-1} (\gamma_\UnitD)$, and $\gamma $ is  measurable in the sigma algebra of $\gamma_{\UnitD}$ by Proposition~\ref{prop: weak limit commutation property}. This concludes the proof.
\end{proof}

\subsection{Stability of SLE}
\label{subsec: SLE stability}

We finish by proving the stability of the chordal SLE 
in both the domain and the parameter $\kappa$. Stability in $\kappa$ has been addressed in~\cite{JRW-SLE_stability, KS}, and domain stability (for $\kappa \le 4$) in~\cite{LK-configurational_measure}. Note that in order to apply the main results of this paper, derived for random triplets $(W^{(n)}, \gamma_\UnitD^{(n)}, \gamma^{(n)})$ such that $W^{(n)}$ and $ \gamma_\UnitD^{(n)}$ are Loewner transforms of each other and $\gamma^{(n)}= \confmapBdry_n^{-1} (\gamma^{(n)}_\UnitD ) $, the results of the previous subsection were necessary.

\begin{prop}
\label{prop: SLE is stable}
Assume that $\kappa_n \to \kappa < 8$ and $(\domain_n; a_n, b_n) \Cara (\domain; a, b)$ satisfy the assumptions of Theorem~\ref{thm: extension of Kemppainen's precompactness}. Then, the $\SLE (\kappa_n)$-curves $\gamma^{(n)} $ in $(\domain_n; a_n, b_n)$ converge weakly in $ X(\C)$ to the $SLE(\kappa)$ curve $\gamma$ in $(\domain; a, b)$.
\end{prop}

\begin{proof}
Embed all the SLEs in the same probability space by sampling from the same Brownian motion, $W^{(n)}_t = \sqrt{\kappa_n} B_t$. Hence, $W^{(n)}_t \to W_t = \sqrt{\kappa} B_t$ almost surely and weakly in $\ctsfcns$. Condition (C) is again satisfied by the argument in Proposition~\ref{prop: SLE in D is measurable}; the claim then follows from Theorems~\ref{thm: Kemppainen's precompactness}--\ref{thm: extension of Kemppainen's precompactness}.
\end{proof}

\section{Proof of the Key lemma~\ref{lem: key lemma 1st statement} in Theorem~\ref{thm: extension of Kemppainen's precompactness}}
\label{sec: pf of key lemma}

In this section, we prove the Key lemma~\ref{lem: key lemma 1st statement}, which also constitutes the bulk of the proofs of Proposition~\ref{prop: weak limit commutation property} and Theorem~\ref{thm: extension of Kemppainen's precompactness}. For convenience, we take an equivalent formulation where the probability bound and the distance bound are given the distinct notations $\tilde{\eps}$ and $\ell$, respectively.

\begin{keylem}
\label{lem: key lemma}
In the setup and notation of Proposition~\ref{prop: weak limit commutation property}, for any $\ell > 0$ and any $\tilde{\eps} > 0$, there exists $\eps_0 (\ell, \tilde{\eps}) > 0$ such that if $\eps < \eps_0$ and $n > n_0(\eps)$, then 
\begin{align*}
\PR^{(n)} [ d ( \confmap_n^{-1}(x) , \confmap_n^{-1} \circ P_{\eps} ( x  ) ) > \ell \text{ for some }x \in \tilde{\gamma}^{(n)}_\UnitD ] \le \tilde{\eps}.
\end{align*}
\end{keylem}

The rest of this section constitutes the proof. Section~\ref{subsec:fjords} defines fjords of a domain and recalls a probabilistic result from~\cite{KS}, showing that ``deep tours into narrow fjords'' by random curves are unlikely, given the equivalent conditions (C) and (G). Section~\ref{subsec:rad proj}, in turn, is purely analytic and argues that the radial projection $\confmap_n^{-1} \circ P_{\eps} (x)$ does not differ much from $\confmap_n^{-1} (x)$ except if the latter lies deep in a narrow fjord of $\domain_n$. The proof is concluded in Section~\ref{subsec:concluding key lem} by putting these pieces together and handling the curve end points. 

\subsection{Fjords}
\label{subsec:fjords}

Let $\domainn$ be a simply-connected domain, $u \in \domainn$ and $a_\n$ and $b_\n$ prime ends of $\domainn$ (here and in continuation, we will use subscript $*$ instead of $n$ to indicate that it is fixed).
Let $S$ be a cross cut of $\domainn$, with $\diam(S) \le \delta$ and $u \not \in S$. Then, we say that the union $F$ of $S$ and the connected component of $\domainn \setminus S$ that does not contain $u$ is a \textit{$\delta$-fjord with respect to $u$}. Similarly, if some neighbourhoods of $a_\n$ and $b_\n$, respectively, are both contained in one and same connected component of $\domainn \setminus S$, then  the union $F$ of $S$ and the connected component  of $\domainn \setminus S$ not adjacent to $a_\n$ and $b_\n$ forms a \textit{$\delta$-fjord with respect to $(a_\n, b_\n)$}.
We say that a point $z \in F$ lies \textit{$\ell$-deep} in $F$ if the interior distance in $\domainn$ from $z$ to $S$ is at least $\ell$, $d_{\domainn} (z, S) \ge \ell$. 

\begin{prop}
\label{lemma: Kemppainen's fjord lemma}
\emph{(Special case of \cite[Lemma~3.14]{KS})}
In the setup of Section~\ref{subsec: setup}, assume that the domains $\domain_n$ are uniformly bounded in area, Area$(\domain_n ) \leq M$ for all $n$, and that the measures $\PR^{(n)}$ satisfy the equivalent conditions (C) and (G). Then, for any $\ell > 0$ and any $\tilde{\eps} > 0$, there exists $\delta = \delta(M, \ell, \tilde{\eps})> 0$ (independent of $n$) such that if $(F^{(n)}_i)_{i=1}^{m^{(n)}}$ is any finite collection of interior-disjoint $\delta$-fjords with respect to $(a_n, b_n)$ in $\domain_n$, then
\begin{align}
\label{eq:fjord tour proba bound}
\PR^{(n)} [ \gamma^{(n)} \text{ visits $\ell$-deep in some of the fjords $(F^{(n)}_i)_{i=1}^{m^{(n)}}$}] \le \tilde{\eps}.
\end{align}
\end{prop}

\subsection{Boundary behaviour of radial projections}
\label{subsec:rad proj}

This subsection constitutes the complex analysis part of the proof of Lemma~\ref{lem: key lemma}. We start with some notations and definitions. Given a conformal map $\confmap_\n: \domainn \to \UnitD$, we define \textit{conformal ray segments}
\begin{align*}
\rho_{ \theta, p, q }^{(\n)} &= \confmapD_\n ( \{ z \in \UnitD:  \; z = t e^{i \theta} \text{ for some } t \in [p, q] \} ), \quad \text{for } 0 \le p < q < 1  \text{ and } \theta \in [0, 2 \pi);
\end{align*}
we will also allow $q=1$ by setting $t \in [p,q)$ above. The entire \textit{conformal ray} is denoted by $\rho_{ \theta, 0,1 }^{(\n)} = \rho_{ \theta }^{(\n)}$ for short. Recall also the concept of \textit{innermost disconnecting arc} from Lemma~\ref{lem: disconnecting arcs} in Appendix~\ref{app:roskis}. 

Let $K \subset \domainn$ be closed and $z \in \domainn \setminus K$. We say that a curve $\curve: [0, 1] \to \domainn$ \emph{connects $z$ to $K$ $\delta$-inside} $\domainn$ if $\curve(0) = z$, $\curve (1) \in K$, and $\curve([0, 1)) \subset \domainn \setminus K$, and for the open $\delta$-thickening of curve $\curve$
\begin{align*}
G_\delta = \{  w \in \C: d(w, \curve(t)) < \delta \text{ for some } t \in [0, 1] \},
\end{align*}
the connected component of $G_\delta \setminus K$ that contains $\curve([0, 1))$ is entirely contained in $ \domainn$; see Figure~\ref{fig: connection delta-inside}.

\begin{figure}
\includegraphics[ width=0.28\textwidth]{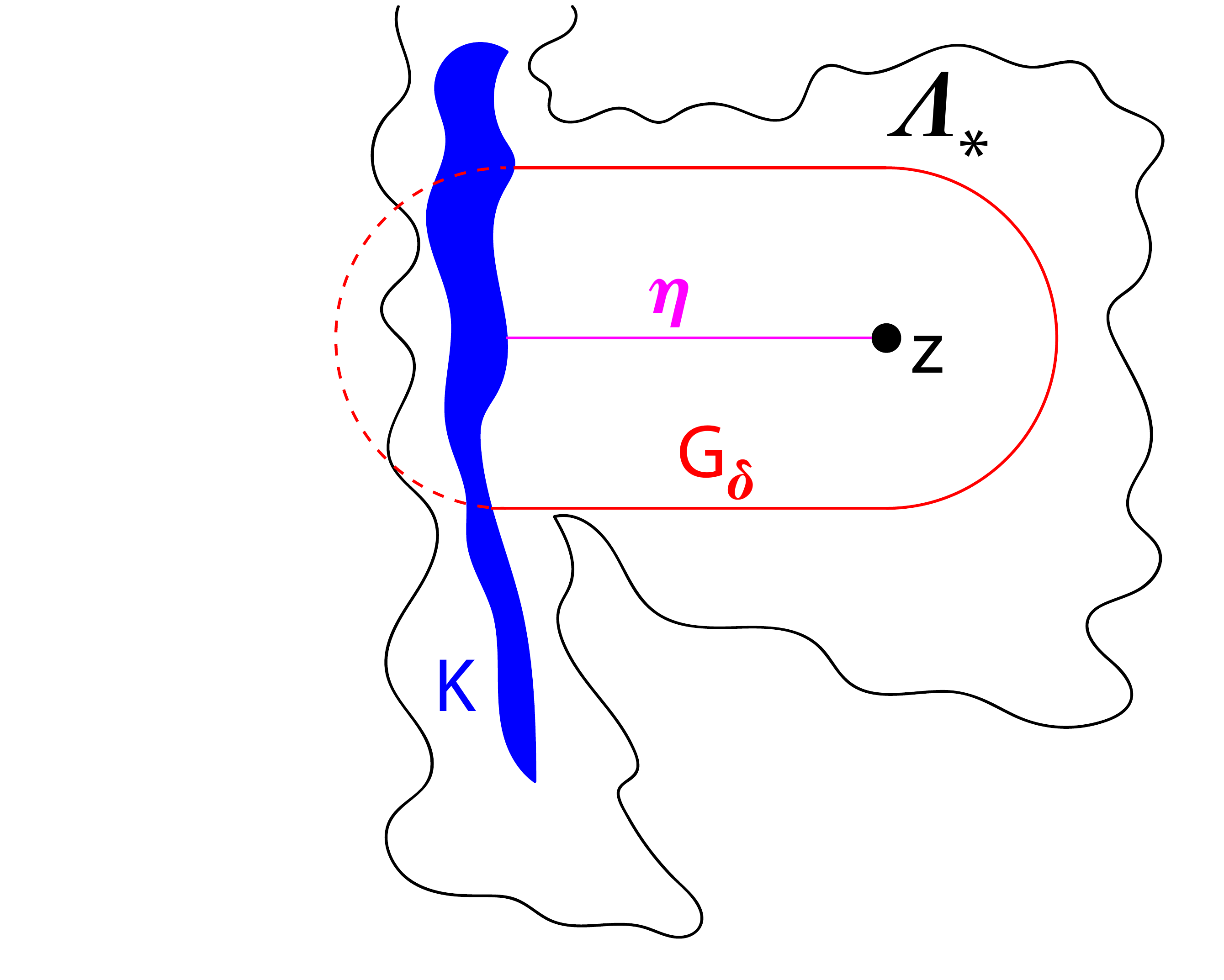}
\caption{\label{fig: connection delta-inside} A curve $\curve$ connecting $z$ to $K $  $\delta$-inside $\domainn$.}
\end{figure}

Finally, we say that \textit{$B(0, 1-\eps)$ is a $\delta$-approximation of $\domainn$ under $\confmapD_{\n}$} if
\begin{itemize}
\item[i)] for all $\zeta \in \confmapD_{\n} ( S (0 , 1 - \eps ))$, we have $d(\zeta, \bdry \domainn) <\delta $; and
\item[ii)] if $z \in \domainn \setminus \confmapD_{\n} ( \overline{B (0 , 1 - \eps ) } )$ satisfies $d(z, \bdry \domainn) \ge \delta $, then there exists no curve $\curve$ that connects $z$ to $\confmapD_{\n} ( \overline{B (0 , 1 - \eps ) } )$ $\delta$-inside $\domainn$.
\end{itemize}
The idea of this definition it can be guaranteed for the tail of a Carath\'{e}odory converging sequence \textit{with the same $\delta$ and $\eps$}.

\begin{lem}
\label{lem: conformal balls get close to boundary}
Assume that $\domain_n \Cara \domain$, and that $\domain$ is bounded. For any $\delta > 0$, there exists $ \eps_0 (\delta)$ such that if $\eps < \eps_0$ is first taken small enough and after that $  n_0 (\delta, \eps)$ large enough, then $B(0, 1-\eps)$ are $\delta$-approximations of $\domain_n$ under $\confmapD_{n}$, for all $ n \ge n_0 (\delta, \eps)$.
\end{lem}

The proof of the above lemma is postponed to the Appendix.
We are now at a position to state the main result of this subsection (see Figure~\ref{fig: conf rays lie in fjords} for an illustration).

\begin{figure}
\includegraphics[ width=0.3\textwidth]{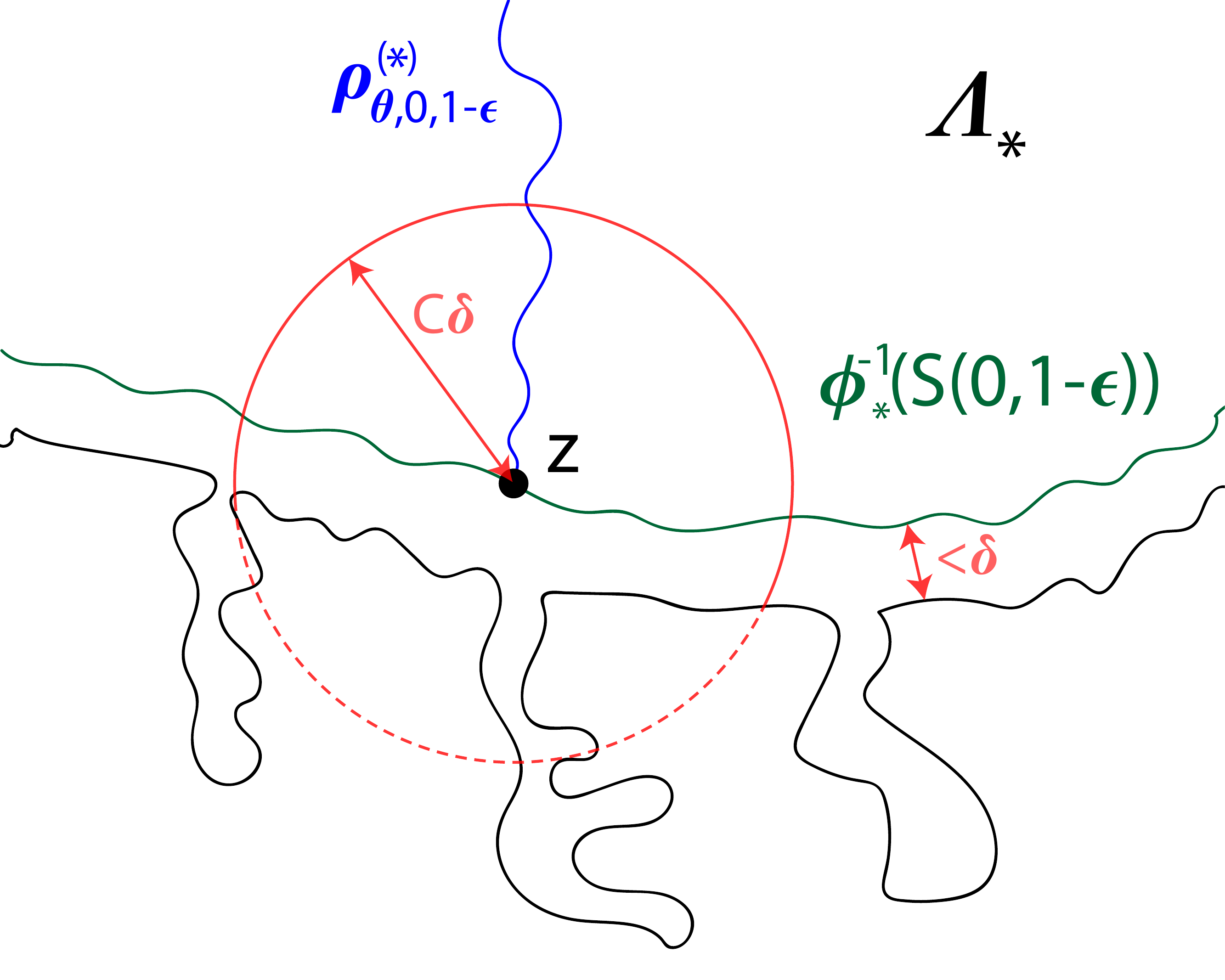}%
\caption{\label{fig: conf rays lie in fjords} A schematic illustration of Proposition~\ref{prop: conformal rays lie in fjords}. The remainder of the conformal ray $\rho^{(\n)}_\theta$ from $z$ onwards lies inside the fjord determined by the cross cut given by the part of the circle drawn in solid.}
\end{figure}

\begin{prop}
\label{prop: conformal rays lie in fjords}
There exist absolute constants $C, D>0 $ such that the following holds.
Let $\domainn$ be a simply-connected planar domain and $\confmap_\n : \domainn \to \UnitD$ its Riemann map normalized at $u \in \domainn$. Let $\delta \in (0, \tfrac{D}{C} d(u,\bdry \domainn ) )$ and suppose that $B(0, 1-\eps)$ is a $\delta$-approximation of $\domainn$ under $\confmapD_{\n}$. Then, for all  $z$ of the form $z = \confmapD_{\n} ( (1 - \eps) e^{i \theta} )$, the innermost arc of $S (z, C \delta)$ disconnecting $z$ from $u$ disconnects from $u$ the entire conformal ray segment $\rho^{(\n)}_{\theta, 1- \eps, 1}$ from $z$ onwards.
\end{prop}

The rest of this subsection complements this proposition:
some necessary results about harmonic measures are given in Section~\ref{subsubsec:Beurling}, and the proof of Proposition~\ref{prop: conformal rays lie in fjords} in Section~\ref{subsubsec:proof of rad prof prop}.

\subsubsection{\textbf{Harmonic measures}}
\label{subsubsec:Beurling}

Recall that the \textit{harmonic measure} $\HarmMeas(z, \domain, E)$ of a boundary set $E \subset \bdry \domain$ in a domain $\domain$ as seen from $z \in \domain$ is the probability that a Brownian motion launched from $z$ first hits $\bdry \domain$ on $E$. In all the cases that we consider, $\HarmMeas(z, \domain, E)$ coincides with the unique harmonic function in $z \in \domain$ that takes boundary values $1$ on $E$ and $0$  on $\bdry \domain\setminus \overline{E}$. An important property is that the harmonic measure is conformally invariant, in the sense that for a conformal map $\confmap$ defined on $\domain$, then $\HarmMeas(\confmap(z), \confmap(\domain), \confmap(E) ) = \HarmMeas(z, \domain, E)$. 

\begin{lem}
\emph{[The (weak) Beurling estimate]} There exist absolute constants $A, \alpha > 0$ such that the following holds. Let $z \in \C$, $R >0$ and denote $B=B(z, R)$; let $K \subset \C$ be a connected closed set that intersects $\bdry B$ and denote $r = d (z, K)$; then
\begin{align*}
\HarmMeas(z, B \setminus K, \bdry B) \le A \left( r/R \right)^\alpha.
\end{align*}
\end{lem}

See, e.g.,~\cite[Proposition~2.11]{CS-discrete_complex_analysis_on_isoradial} for a probabilistic proof;
an analytic proof follows from the Beurling projection theorem (\cite[Theorem~3.6]{Ahlfors} or~\cite[Theorem~1]{Oksendal}) and a sequence of explicit conformal maps as in~\cite[Section~4]{Benes-note}. The latter also reveals that $A=\frac{4}{\pi}$ and $\alpha=1/2$ yield a tight bound (in the sense of relative error when $K$ is a radial line segment).

We define the harmonic function $h: \UnitD \setminus [p, q ] \to \R$ by
\begin{align}
\label{eq:def of hm}
h (z) = \HarmMeas(z, \UnitD \setminus [p, q ]  , [p, q ] ).
\end{align}
This function, together with the conformal invariance of harmonic measures and the Beurling estimate, will play a key role in restricting the geometric behaviour of conformal rays, due to the following maximization property, whose proof we postpone to Appendix~\ref{app:roskis}.

\begin{lem}
\label{lem: maximization property of h_m}
For any $p, q, \epsilon \in [0, 1]$ with $1-\eps \le p < q \le 1$, the function $h$ defined in~\eqref{eq:def of hm} attains its unique maximum in $\overline{B (0, 1-\epsilon)}$ at the point $(1 - \eps)$.
\end{lem}

\subsubsection{\textbf{Proof of Proposition~\ref{prop: conformal rays lie in fjords}}}
\label{subsubsec:proof of rad prof prop}

Assume for a contradiction that for some conformal ray $\rho^{(\n)}_{\theta, 1- \eps, 1}$ in some domain $\domainn$, the innermost disconnecting component of $S (z, C \delta)$ does not separate $\rho^{(\n)}_{\theta, 1- \eps, 1}$ from $u$. We will show that there exist absolute constants $C_0, D_0 > 0$ such that if $C > C_0$ and $D<D_0$, this assumption leads to a contradiction. 

Define the following notations (see Figure~\ref{fig: cross cuts and conf ray}):
let $S_2$ and $S_0$ denote the innermost disconnecting components of $S (z, C \delta/3)$ and $S (z, C \delta)$ in $\domainn$, respectively (they exist if $C > 3$ and $ D $ is small enough). Denote by $Q$ the unique connected component of $\domainn \setminus ( S_0 \cup S_2 )$ adjacent to both $S_2$ and $S_0$, and equip $(Q; S_0, S_1, S_2, S_3)$ with the structure of a topological quadrilateral, where the sides are indexed counterclockwise and $S_1, S_3$ lie on $\bdry \domainn$.
Let $q$ be the smallest number $> (1- \eps)$ such that $\confmapD_\n (q e^{i \theta}) \in S_0$ (such exists by the counter-assumption), and $p$ is the largest of number $<q$ such that $\confmapD_\n (p e^{i \theta}) \in S_2$.

\begin{figure}
\includegraphics[ width=0.34\textwidth]{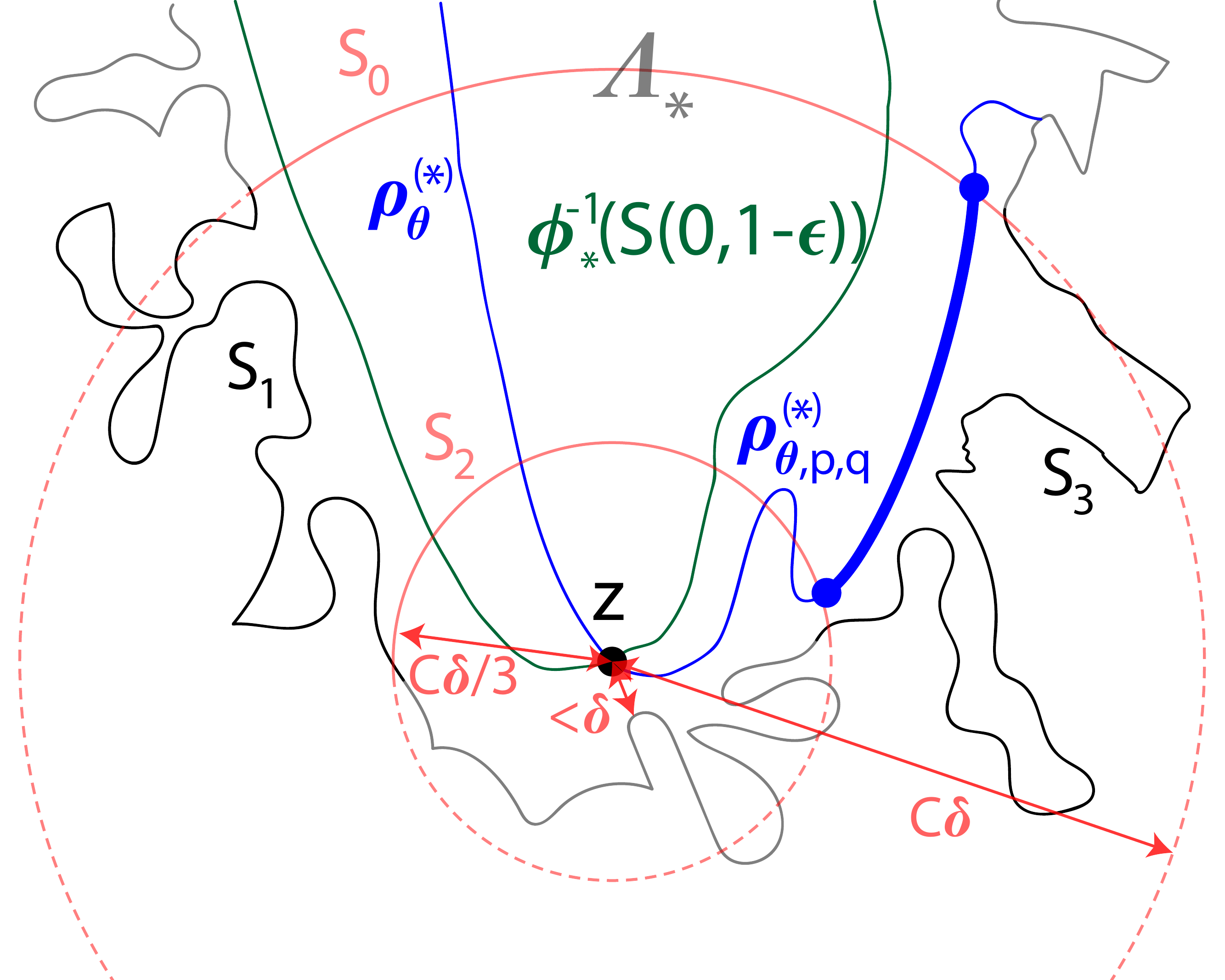}%
\caption{\label{fig: cross cuts and conf ray} The point $z$, with $d(z, \bdry \domainn) < \delta$, is the crossing point of the conformal image $\confmapD_\n (S(0, 1- \eps))$ of a circle and the conformal ray $\rho^{(\n)}_\theta$. The cross cuts $S_2$ and $S_0$ are arcs of the circles $S (z, C \delta/3)$ and $S (z, C \delta)$, respectively, and they determine a quadrilateral $Q \subset \domainn$  with sides $S_0, S_1, S_2$, and $ S_3$, crossed by 
$\rho^{(\n)}_{\theta, p, q}$.
}
\end{figure}

Denote $h^{(\n)} (z) = h (e^{-i\theta} \confmap_{\n} (z) )$ where $h$ is the harmonic function (measure)~\eqref{eq:def of hm}.
Using first the conformal invariance of harmonic measures, then a simple probabilistic Brownian motion argument, and finally the Beurling estimate and the assumption $d(z, \bdry \domainn ) < \delta$, we calculate
\begin{align}
\nonumber
h^{(\n)} (z) &= \HarmMeas (z, \domainn \setminus \rho^{(\n)}_{\theta, p, q}, \rho^{(\n)}_{\theta, p, q}) \\
\nonumber
& \le \HarmMeas (z, \domainn \cap B(z, C \delta / 3), \bdry B(z, C \delta / 3) )  \\
\label{eq: first harm meas estimate}
& \le A \left( \frac{d(z, \bdry \domainn)}{C \delta / 3} \right)^\alpha  \le A (3/C)^\alpha.
\end{align}
Note that if $C$ is large, this quantity is small, while on the other hand by Lemma~\ref{lem: maximization property of h_m} and conformal invariance, $z$ maximizes the harmonic measure $h^{(\n)}$ of $\rho^{(\n)}_{\theta, p, q}$ in $\confmap_{\n}^{-1} (\overline{B(0, 1-\eps)})$. The strategy of the proof is now, informally, to find $w \in \confmap_{\n}^{-1} (\overline{B(0, 1-\eps)})$ which is close to $\rho^{(\n)}_{\theta, p, q}$ and thus produce a contradiction.

\begin{figure}
\includegraphics[ width=0.34\textwidth]{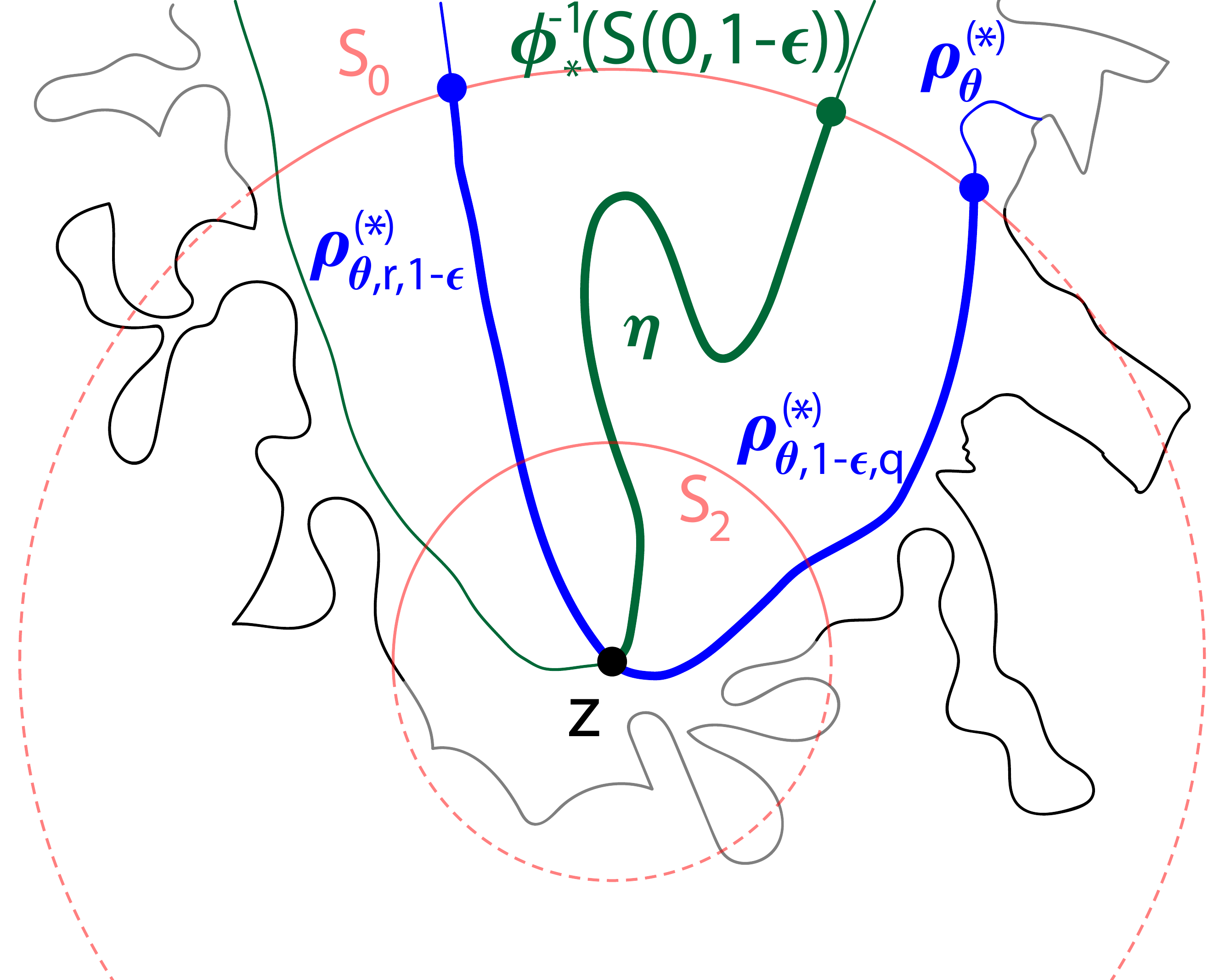} \qquad
\includegraphics[ width=0.34\textwidth]{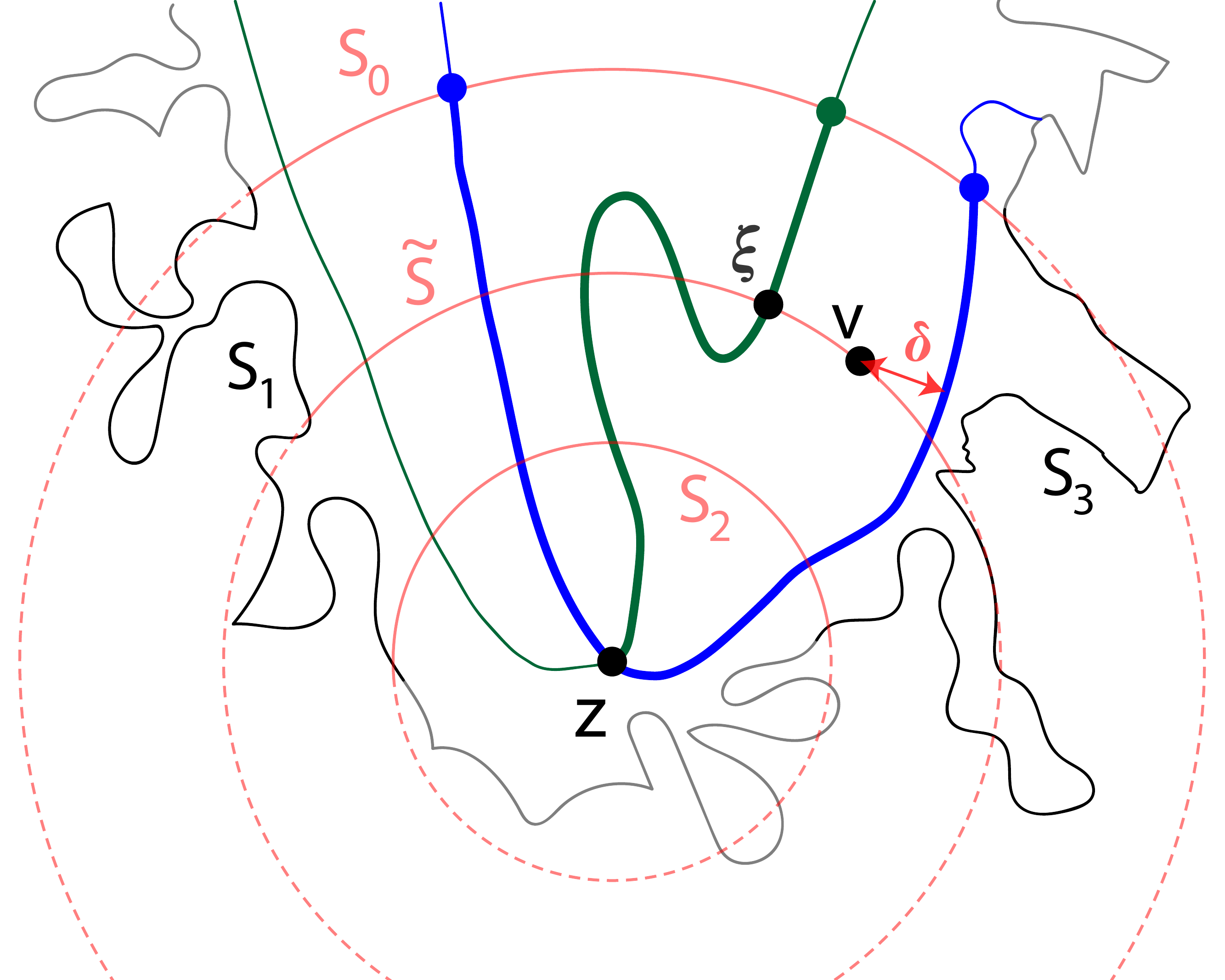}%
\caption{\label{fig: three curves} Left: the curves $\rho^{(\n)}_{\theta, r, 1-\eps }$, $\rho^{(\n)}_{\theta, 1-\eps, q}$, and $\curve$ from $z$ to $S_0$. Right: the cross cut $\tilde{S}$ and the points $\xi \in \curve \cap \tilde{S}$ and $v \in \tilde{S}$}
\end{figure}

To formalize this strategy, let $\tilde{S}$ be the innermost disconnecting arc of $S(z, 2 C \delta / 3)$, which by Lemma~\ref{lem: disconnecting arcs} traverses from side $S_1$ to $S_3$ inside the quadrilateral $Q$. We claim that there is a point $w \in \confmapD_\n ( S (0, 1 - \eps) )$ such that
\begin{align}
\label{eq: ppties of w}
d(w, \rho^{(\n)}_{\theta, p, q} ) \le 2 \delta \quad \text{and} \quad d(w, \tilde{S} ) \le \delta.
\end{align}
To prove this claim, we will study the following three curves from $z$ to $S_0$
(see Figure~\ref{fig: three curves}(Left)): (1) $\rho^{(\n)}_{\theta, 1-\eps , q}$, where $q$ is as previously; (2) $\rho^{(\n)}_{\theta, r, 1-\eps }$, where $r$ is the largest number $< 1 - \eps$ such that $\confmapD_\n (r e^{i \theta}) \in S_0$; and (3) the segment $\curve$ along the curve  $\confmapD_\n ( S (0 , 1- \eps ))$ from $z$ to first hitting $S_0$, to the direction chosen so that $\curve$ is disconnected from $\bdry \domainn$ by $S_0 \cup \rho^{(\n)}_{\theta, r , q} $. Note that $\eta$ has to cross $\tilde{S}$; if some point of $\eta \cap \tilde{S}$ is at a distance $\leq \delta$ from $\rho^{(\n)}_{\theta, p, q}$ we are done, so assume the contrary. Let then $\tilde{Q}$ be the connected component of $Q \setminus \rho^{(\n)}_{\theta, p, q}$ crossed by ${\curve}$. Note that either $S_1$ or $S_3$ is disconnected from $\tilde{Q}$ by $ \rho^{(\n)}_{\theta, p, q}$; for the rest of the proof we will assume for definiteness that it is $S_3$ as in Figure~\ref{fig: three curves} (if not, just reverse the roles of $S_1$ and $S_3$). Hence, $\tilde{Q}$ is a topological quadrilateral with $S_1$ and $ \rho^{(\n)}_{\theta, p, q}$ being two opposite sides and the two other ones contained in $S_0$ and $S_2$, respectively. Let $\sigma$ be a connected component of $\tilde{S}$ in $\tilde{Q}$ that disconnects $S_0$ from $S_2$ in $\tilde{Q}$, so ${\curve}$ crosses $\sigma$. Let $\xi \in \curve \cap \sigma$ be the last such intersection point along $\sigma$, when $\sigma$ is directed from $S_1$ to $\rho^{(\n)}_{\theta, p, q}$; by assumption $d(\xi, \rho^{(\n)}_{\theta, p, q}) > \delta$ (see Figure~\ref{fig: three curves}(Right)). Let finally $v$ be the first point, proceeding from $\xi$ towards $\rho^{(\n)}_{\theta, p, q}$ along the curve $\sigma$,  at which $d(v, \rho^{(\n)}_{\theta, p, q}) = \delta$. The proof of the claim will now be finished if we show that $\overline{B(v, \delta)}$ disconnects $S_0$ from $S_2$ in $\tilde{Q}$; indeed, a point $w \in {\curve } \cap \overline{B(v, \delta)}$ then exists and satisfies the desired properties. 

We thus prove this disconnection property of $\overline{B(v, \delta)}$. Let  $r'$ denote the smallest number $>r$ so that $\confmapD_\n (r' e^{i \theta}) \in S_2$, so $ \rho^{(\n)}_{\theta, r, r'}$ is a crossing of $\tilde{Q}$ in $\tilde{Q}$. Now, on the segment of the reversed curve $\sigma$ from $v$ up to first hitting $\confmapD_\n ( \overline{B (0, 1 - \eps) } ) $ (which occurs at latest in the point $\xi$), we have $ d (\cdot, \rho^{(\n)}_{\theta, p, q}) \ge \delta$ by definition, and $d (\cdot, S_0) ,  d (\cdot, S_2) \ge  C \delta / 3 \ge \delta$. But this curve lies in a quad $Q'$ restricted by $ \rho^{(\n)}_{\theta, p, q}$, $ \rho^{(\n)}_{\theta, r, r'}$, and two subcurves of $S_0$ and $S_2$, and the curve is disconnected in this quad from $ \rho^{(\n)}_{\theta, r, r'}$ by $\curve \subset   \confmapD_\n ( \overline{B (0, 1 - \eps) } )$. In particular, the curve hence connects $v$ to $  \confmapD_\n ( \overline{B (0, 1 - \eps) } ) $ $\delta$-inside $Q'$, and hence also $\delta$-inside $\domainn$. Since it was assumed that $B(0, 1-\eps)$ is a $\delta$-approximation of $\domainn$ under $\confmapD_{\n}$, it must thus hold that $d(v, \partial \domainn) < \delta$. But $d(v, \partial \domainn) \geq d(v, \partial \tilde{Q}) \geq \min \{ d(v, S_1), d(v, S_0), d(v, S_2), d(v, \rho^{(\n)}_{\theta, p, q})\}$, and $d(v, S_0), d(v, S_2), d(v, \rho^{(\n)}_{\theta, p, q}) \geq \delta$ as we saw above. Hence, we conclude that $ d(v, S_1) < \delta$ and that $\overline{B(v, \delta)}$ intersects the opposite sides $S_1$ and $\rho^{(\n)}_{\theta, p, q}$ of $\tilde{Q}$. This finishes the construction of the point $w$ satisfying~\eqref{eq: ppties of w}.

Define next $d^*$ as the following ``strong distance'' from $w$ to $S_1$:
$d^*$ is the supremum of $d \le ( C\delta /3 - \delta )$ such that the boundary of the connected component of $B(w, d)$ in $ \domainn \setminus \rho^{(\n)}_{\theta, p, q}$ that contains $w$, does not intersect $S_1$.
Hence, this connected component with $d=d^*$ does not intersect $S_0$, $S_1$, or $S_2$ and is separated from $S_3$ by $\rho^{(\n)}_{\theta, p, q}$. Simple harmonic measure arguments and the Beurling estimate thus yield
\begin{align}
\nonumber
h^{(\n)} (w) &= \HarmMeas (w, \domainn \setminus \rho^{(\n)}_{\theta, p, q}, \rho^{(\n)}_{\theta, p, q}) \\
\nonumber 
& \ge \HarmMeas (w,  B(w, d^* ) \setminus \rho^{(\n)}_{\theta, p, q} , \rho^{(\n)}_{\theta, p, q} )  \\
\nonumber 
& = 1-\HarmMeas (w,  B(w, d^* ) \setminus \rho^{(\n)}_{\theta, p, q} , S(w, d^* ))  \\
\label{eq: second harm meas estimate}
& \ge 1-A \big( d (w, \rho^{(\n)}_{\theta, p, q} ) / d^* \big)^{\alpha} \ge 1-A \left( 2 \delta/d^* \right)^{\alpha}.
\end{align}
Combining~\eqref{eq: first harm meas estimate},~\eqref{eq: second harm meas estimate}, and the maximization property $h^{ (\n)}(z) \ge h^{ (\n) }(w)$ of Lemma~\ref{lem: maximization property of h_m}, we obtain (assuming that $C$ is large enough so that $(3/C)^\alpha \leq 1/A$)
\begin{align}
\label{eq: d is bdd by an absolute cst}
d^* \le \left( 1/A - (3/C)^\alpha \right)^{- 1/ \alpha } 2 \delta,
\end{align}
so also the strong distance $d^*$ from $w$ to $S_1$ is also short; see Figure~\ref{fig: the point w}. In particular, denoting $d^{**} = \max\{ d^*, 2 \delta\}$, we observe that both $\rho^{(\n)}_{\theta, r , 1-\eps}$ and $\rho^{(\n)}_{\theta, p , q}$ must intersect $\overline{B(w, d^{**})}$.

\begin{figure}
\includegraphics[ width=0.34\textwidth]{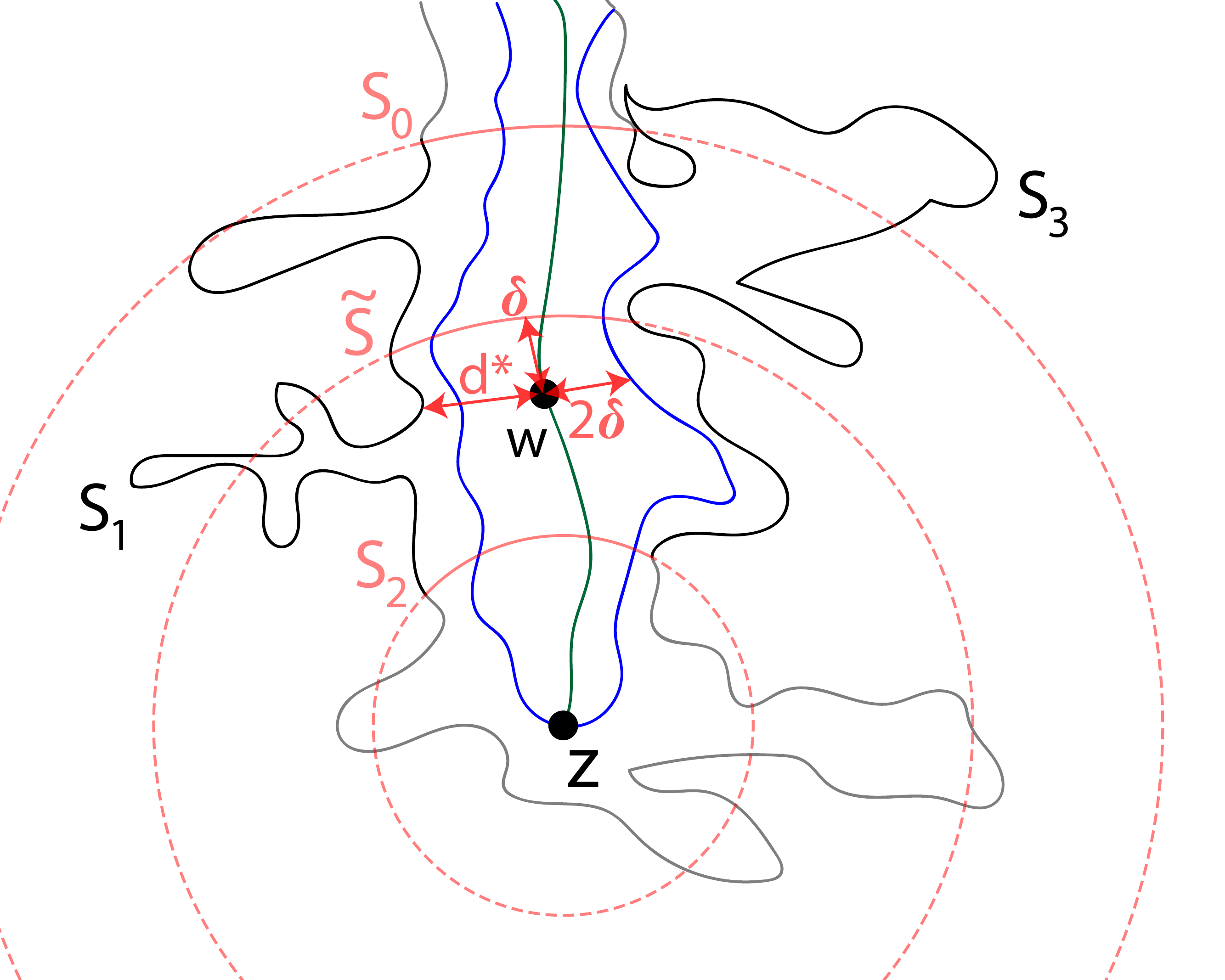}%
\caption{\label{fig: the point w} The point $w \in \curve$ satisfies $d(w, \rho^{(\n)}_{\theta, p, q} ) \le 2 \delta$, and $d(w, \tilde{S}) \le \delta$, and $d(w, S_1 ) \le d^* \le \left( 1/A - (3/C)^\alpha \right)^{- 1/ \alpha } 2 \delta$.}
\end{figure}

We can now conclude the contradiction. Namely, with the absolute constant $D$ small enough, a simple harmonic measure argument shows that the opposite conformal ray $\rho^{(\n)}_{\theta + \pi}$ cannot intersect the $2C \delta$-fjord with respect to $u$ defined by the cross cut $S_0$. In particular, it occurs that the entire boundary segment $S_1$ maps to the lower half-plane by $e^{-i\theta}\confmap_{\n}(\cdot)$, or that $S_3$ maps to the upper half-plane (or both). Let us assume for definiteness that the first case occurs; the second is treated analogously. The argument with Beurling estimate explained in Figure~\ref{fig:contradiction} and its caption now yields a contradiction. This concludes the proof. \hfill$\square$

\begin{figure}
\includegraphics[ width=0.22\textwidth]{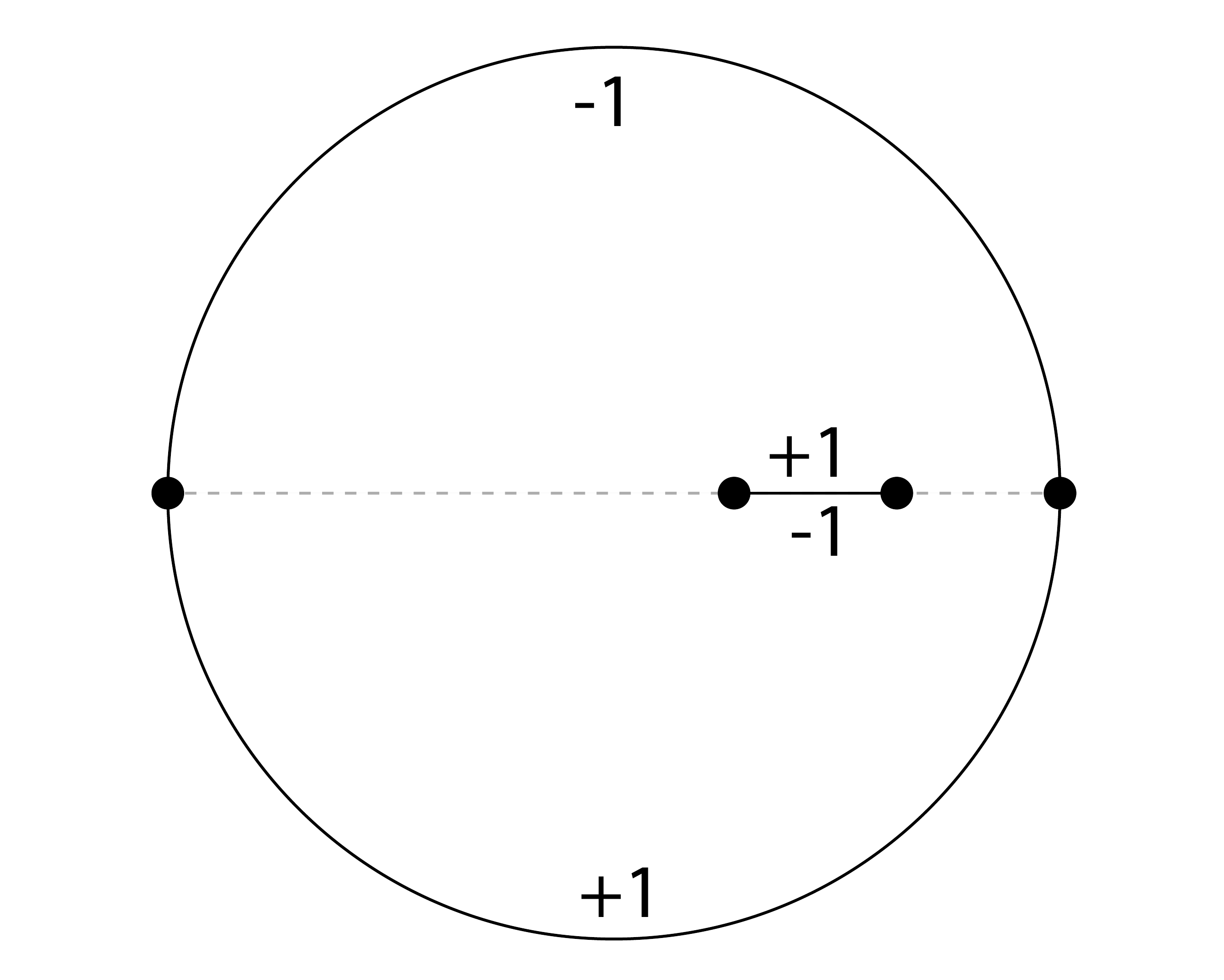} \quad
\includegraphics[ width=0.34\textwidth]{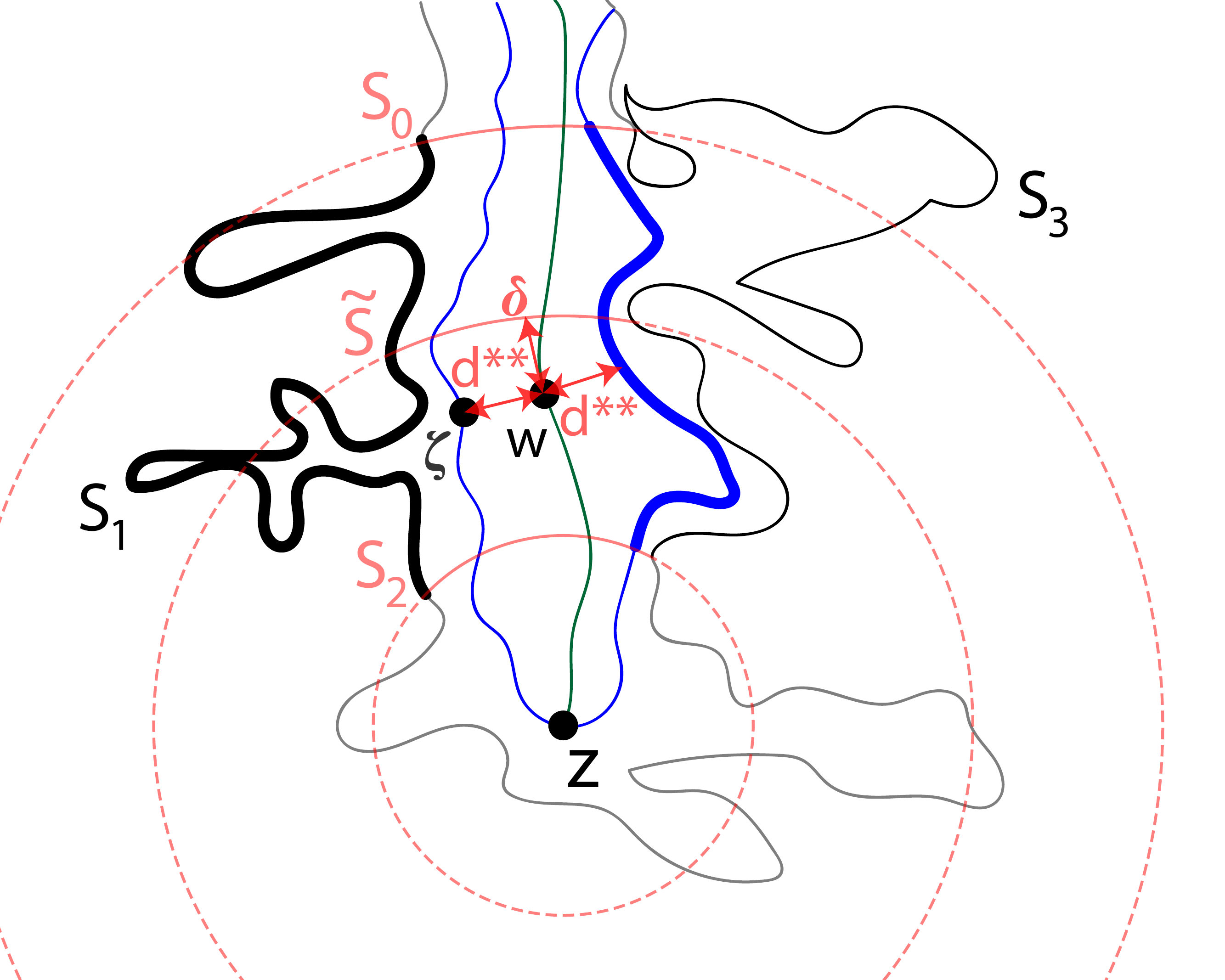}
\caption{
\label{fig:contradiction}
Left: The boundary values of a harmonic function $\tilde{h}$ (essentially a harmonic measure) on $\UnitD \setminus [p,q]$. By symmetry, this function vanishes on $(-1, 1) \setminus [p,q]$. Right: $\tilde{h}^{(\n)} (z) = \tilde{h} (e^{-i\theta}  \confmap_{\n} (z) )$ is a harmonic function on $\domainn \setminus \rho^{(\n)}_{\theta, p, q}$ that vanishes on $\rho^{(\n)}_{\theta,0, 1} \setminus \rho^{(\n)}_{\theta,p, q}$. On the other hand, there must exist a point $\zeta \in \rho^{(\n)}_{\theta,0, p}$ that is close to $w$, and $\tilde{h}^{(\n)}$ takes boundary values $1$ on $S_1$ and on the side of $\rho^{(\n)}_{\theta, p, q}$ facing towards $\zeta$ (drawn in thick lines). The Beurling estimate shows that, when $C$ is large enough, $\tilde{h}^{(\n)}(\zeta) >0$, a contradiction.
}
\end{figure}

\subsection{Concluding the proof of the Key Lemma}
\label{subsec:concluding key lem}

\subsubsection{\textbf{The collection of fjords}}
\label{subsubsec: defining the fjords}

Given $\delta > 0$ we define, for a domain $\domainn$,  
 a finite collection $(F^{(\n)}_i)_{i=1}^{m^{(\n)}}$ of interior-disjoint $(6C \delta)$-fjords with respect to $u$ in $\domainn$ as follows. Start with the plane and $\domainn$. Then, we draw, say in black, the square grid $C \delta \Z^2$, and the component of the $(C + 1)\delta$-interior of $\domainn$ containing $u$, denoted $G$, where $C$ is the absolute constant from Proposition~\ref{prop: conformal rays lie in fjords}.
  Then, we draw, say in red, the simple loop $\eta$ on the grid $C \delta \Z^2$ that stays inside $G$ and encloses $u$ and a maximal amount of squares. Hence, any square of $C \delta \Z^2$ which is not enclosed by $\eta$ but has at least one side on the loop $\eta$, must intersect $\C \setminus G$. For all such squares, we first draw in red the boundary of the square from the side(s) on the loop $\eta$ in both directions until it hits $\bdry G$. From those hitting points, we draw, still in red, a straight line segment to the closest point on $\bdry \domainn$, thus of length $(C + 1)\delta$. It is a simple exercise in plane geometry to show that two such line segments cannot intersect in $\domainn$. Now, the line segments drawn in red divide $\domainn $ into connected components, one of which is the interior of $\eta$ and the remaining ones are the desired fjords $(F^{(\n)}_i)_{i=1}^{m^{(\n)}}$, cut from $\domainn$ by cross cuts drawn in red, which we denote by $(S^{(\n)}_i)_{i=1}^{m^{(\n)}}$. By construction, the fjords $(F^{(\n)}_i)_{i=1}^{m^{(\n)}}$ are fjords with respect to $u$, and interior-disjoint and finitely many. The mouths of the fjords have a diameter at most $6C\delta$: indeed, each mouth consists of line segments from the square of $C \delta \Z^2$, and two other segments of length $(C+1) \delta$, and $\sqrt{2} C \delta + 2 (C+1) \delta < 6 C \delta$.

Note that we constructed above fjords with respect to $u$, while Proposition~\ref{lemma: Kemppainen's fjord lemma} addressed fjords with respect to $(a_\n, b_\n)$. This difference is handled as follows. Let $V_\n$ be a fixed neighbourhood of $a_\n$ in $\domainn$, such that it also is a fjord with respect to $u$ in $\domainn$, determined by a cross cut $S_\n$.
Let $\delta > 0$ be small enough so that for the $6C\delta$-thickening $H$ of $S_\n$, $a_\n \not \in H$ and furthermore the connected component of the $6C \delta$-interior of $\domainn$ containing $u$ intersects the component of $a_\n$ in $V_\n \setminus H$. 

\begin{lem}
\label{lem: fjords wrt different pts}
Given $V_\n$ and $\delta$ as above, if $x \in \domain_\n \setminus V_{\n }$ lies in one of the fjords $F^{(\n)}_i$, then that fjord is also a fjord with respect to $a_\n$.
\end{lem}

\begin{proof}
If $F^{(\n)}_i$ and $V_\n$ are disjoint, the claim is immediate. If $F^{(\n)}_i$ and $V_\n$ intersect, then $ F^{(\n)}_i \not \subset V_\n $ since by assumption $x \in F^{(\n)}_i \setminus V_\n$.
On the other hand, by construction, $F^{(\n)}_i$ does not intersect the $6 C \delta$-interior of $\domain_\n$, while $V_\n$ does by assumption, so $ V_\n \not \subset F^{(\n)}_i$. Thus, if the two fjords intersect, they must do it in such a manner that their mouths $S^{(\n)}_i$ and $S_\n$ cross.
Note that $S^{(\n)}_i$ has diameter $< 6C \delta$. Hence, $S^{(\n)}_i$ lies in the $6C \delta$-thickening of $S_\n$, i.e., $S^{(\n)}_i \subset H$. By the definition of $V_\n$, $a_\n$ is  connected in $\domainn \setminus H \subset\domainn \setminus S^{(\n)}_i$ to the component of the $6C \delta$-interior of $\domainn$ containing $u$ and, continuing within the latter, to $u$ in $\domainn \setminus S^{(\n)}_i$. Hence, $F^{(\n)}_i$ is a fjord with respect to $a_\n$.
\end{proof}
 
Define $P^{\domainn}_{\eps}: \domainn \to \domainn$ by $P^{\domainn}_{\eps} = \confmap_\n^{-1} \circ P_\eps \circ \confmap_\n$. The following (simple) lemma is the key in combining Propositions~\ref{lemma: Kemppainen's fjord lemma} and~\ref{prop: conformal rays lie in fjords}. 

\begin{lem}
\label{lem: projection is nice except in fjords}
Let $\domainn$ be a simply-connected planar domain and $\confmap_\n : \domainn \to \UnitD$ its Riemann map normalized at $u \in \domainn$, and $C, D > 0$ the absolute constants from Lemma~\ref{lem: conformal balls get close to boundary}.
Let $\delta \in (0, \tfrac{D}{C} d(u,\bdry \domainn ) )$ and suppose that $B(0, 1-\eps)$ is a $\delta$-approximation of $\domainn$ under $\confmapD_{\n}$.
Then, every $x \in \domainn$ with $P^{\domainn}_{\eps} (x)  \ne x$  lies in some of the fjords $F^{(\n)}_i$. If
$
d ( x, P^{\domain_\n}_{\eps} (x) ) \ge \ell,
$
then $x$ lies $( \ell - 7 C \delta )$-deep in that fjord.
\end{lem}

\begin{proof}
Suppose that $x \neq P^{\domainn}_{\eps} (x) $. 
By Proposition~\ref{prop: conformal rays lie in fjords}, $d(P^{\domainn}_{\eps} (x) , \bdry \domainn ) < \delta$ and the innermost disconnecting component $S$ of the circle arc $S(P^{\domainn}_{\eps} (x) , C \delta )$ also disconnects $x$ from $u$. Hence, neither $x$ nor $S$ can intersect the component of the $(C + 1)\delta$-interior of $\domainn$ that contains $u$, i.e., $G$, so by the definition of the fjords $F^{(\n)}_i$, $x$ lies in one of them, say $F^{(\n)}_j$, and $S$ in the union of them.

Suppose then that $d ( x, P^{\domainn}_{\eps} (x) ) \ge \ell$. If $S \subset F^{(\n)}_j$, we compute
\begin{align*}
d_{\domainn} (x, S^{(\n)}_j) & \ge d_{\domainn} (x, S)  \ge  d (x, S)  \ge d (x, P^{\domainn}_{\eps} (x) ) - C \delta \ge \ell - C \delta.
\end{align*}
If $S \not \subset F^{(\n)}_j$ note first that, by definition, all the fjords $F^{(\n)}_i$ intersect $G$ while $S$ does not. On the other hand, if $S \subset \domainn \setminus F^{(\n)}_j$ were to occur, since $x \in S \cap F^{(\n)}_j$, $S$ would have to disconnect the entire $F^{(\n)}_j$ from $u$ and in particular intersect $G$, a contradiction. Thus, $S \not \subset F^{(\n)}_j$ means that $S$ has to intersect $S^{(\n)}_j$, and in particular $d( P^{\domainn}_{\eps} (x) , S^{(\n)}_j) \le C \delta$. This gives
\begin{align*}
d_{\domainn} (x, S^{(\n)}_j) & \ge d (x, S^{(\n)}_j)  \ge  d (x, P^{\domain_\n}_{\eps} (x) ) - d(P^{\domain_\n}_{\eps} (x) , S^{(\n)}_j) - \diam(S^{(\n)}_j) \ge \ell - 7 C \delta.
\end{align*}
where the second step is the triangle inequality. The claim follows.
\end{proof}

\subsubsection{\textbf{Proof of Lemma~\ref{lem: key lemma}}}

Recall that the constants $\tilde{\eps}, \ell$ as well as the area bound $M$ were given in the setup of the lemma. Set the parameter $\chi$ through condition (G) so that, for any $n$, an unforced crossing of a fixed boundary annulus $A(z, \rho, \chi \rho)$ by the curve $\gamma^{(n)}$ occurs with probability $\le \tilde{\eps} / 3$ and then $r>0$ so that $(2\chi + 1) r < \ell$. The remaining parameters will be chosen so that first $\delta < \min \{ \delta^{(1)}_0(r), \delta^{(2)}_0 (M, \ell, \tilde{\eps}) \}$, then $\epsilon < \epsilon_0 (\delta)$, and finally $n> \max\{ n^{(1)}_0(r, \delta), n^{(2)}_0 (\epsilon, \delta) \}$, where the limiting parameter values are specified in the proof below.
 
 Let $S_r$ be the cross cut in the limit domain $\domain$ as in the definition of close approximations, equipped with an auxiliary reference point (see Section~\ref{subsubsec: close approximations}) and let $\delta^{(1)}_0(r)$ be small enough so that the assumption of (and right above) Lemma~\ref{lem: fjords wrt different pts} holds for the limiting domain $\domain$ and the fjord $V$ cut out by $S=S_r$ whenever $\delta < \delta^{(1)}_0$. By the Carath\'{e}odory convergence $\domain_n \Cara \domain$ and close approximation property, there exists $n^{(1)}_0 = n^{(1)}_0(r, \delta)$ so that for $n > n^{(1)}_0$ (i) $a_n \in B(a, r  ) $; (ii) there exists a corresponding cross cut $S^{(n)}$ in $\domain_n$; and (iii) the assumption of Lemma~\ref{lem: fjords wrt different pts} also holds for $\domain_n$ with the fjord $V^{(n)}$ cut out by $S^{(n)}$ and $\delta$.
 
By close approximation, the curves $\gamma^{(n)} $ are not forced to cross any component of $A (a, r , \chi  r ) $ separated from $u$ by $S^{(n)}$, and by the choice of $\chi$,
\begin{align*}
\PR^{(n)} [  x \in B(a, \chi r )  \text{ for all } x \in \gamma^{(n)} \cap V_n ] \ge 1- \tilde{\eps} / 3.
\end{align*}
Suppose now that the probable event above occurs and let $x \in \gamma^{(n)}  \cap V_n$. 
Let $\epsilon < \epsilon_0 (\delta)$ be small enough and $n > n^{(2)}_0 (\epsilon, \delta)$ large enough so that Lemma~\ref{lem: conformal balls get close to boundary} and thus also Proposition~\ref{prop: conformal rays lie in fjords} holds true. The innermost disconnecting arc of $S(P^{\domain_n}_\eps (x  ), C\delta)$ thus defines a fjord that intersects $V_n$ at $x$; this arc hence either lies inside the fjord $V_n$ or it crosses its mouth. In the latter case,  $d( P^{\domain_n}_\eps (x  ) , B(a, r) ) \leq d( P^{\domain_n}_\eps (x  ) , S^{(n)} ) \le C \delta$ and thus $d(a, P^{\domain_n}_\eps (x  ) )  \leq C \delta + r$ and finally $d(x, P^{\domain_n}_\eps (x  ) ) \leq d(x, a) + d(a, P^{\domain_n}_\eps (x  ) )  \le \chi r + r +  C \delta < \ell$ (where $ C \delta  \le r $ by the assumptions of Lemma~\ref{lem: fjords wrt different pts}). In the former case, note that the curve $\gamma^{(n)}$ (by Proposition~\ref{prop: conformal rays lie in fjords}) after the visiting $x$ has to cross the innermost disconnecting arc of $S(P^{\domain_n}_\eps (x  ), C\delta)$ to later reach $b_n$, and this crossing point lies in $V_n$, so by the assumed probable event, the crossing point also lies in $B(a, \chi r )$. Thus, $d( P^{\domain_n}_\eps (x  ) , B(a, \chi r ) ) \le C \delta$ and with the same computation as above $d(x, P^{\domain_n}_\eps (x  ) ) \le ( 2\chi + 1) r < \ell$.
In conclusion,
\begin{align}
\label{eq:bdry point}
\PR^{(n)} [  d(x,P^{\domain_n}_\eps (x  ) ) < \ell \text{ for all } x \in \gamma^{(n)} \cap V_n ] \ge 1- \tilde{\eps} / 3.
\end{align}

The boundary point $b_n$ is treated similarly; let $\tilde{V}_n$ in what follows denote the counterpart of $V_n$ for $b_n$. 

Suppose now that $x \in \Lambda_n \setminus (V_n \cup \tilde{V}_n)$, and $d(x, P^{\domain_n}_\eps (x  )) \geq \ell$. By Lemmas~\ref{lem: fjords wrt different pts}--\ref{lem: projection is nice except in fjords}, $x$ then lies $(\ell-7C \delta)$-deep in one of the fjords $F^{(n)}_i$, which also is a fjord with respect to $(a_n, b_n)$.
Hence,
\begin{align}
\nonumber
\PR^{(n)} & [ d(x,P^{\domain_n}_\eps (x  ) ) \geq \ell \text{ for some } x \in \gamma^{(n)} \cap (\domain_n \setminus (V_n \cup \tilde{V}_n)) ]
\\ 
\label{eq:bulk point}
& \leq 
\PR^{(n)} [ \gamma^{(n)} \text{ visits $\ell -  7 C \delta$-deep in some of the fjords $(F^{(n)}_i)_{i=1}^{m^{(n)}}$}] 
 \leq \tilde{\epsilon}/3,
\end{align}
where on the last line, we chose by Proposition~\ref{lemma: Kemppainen's fjord lemma} a $\delta^{(2)}_0 = \delta^{(2)}_0 (M, \ell, \tilde{\eps})$ so that for all $\delta < \delta^{(2)}_0$, the event on the left has probability $\le \tilde{\eps}/3$. 

By combining~\eqref{eq:bdry point}, its analogue for $b_n$, and~\eqref{eq:bulk point}, we now obtain
\begin{align*}
\PR^{(n)} & [ d(x,P^{\domain_n}_\eps (x  ) ) \geq \ell \text{ for some } x \in \gamma^{(n)} \cap \domain_n ] \leq \tilde{\eps}.
\end{align*}
So far, this only concerns \textit{interior points} $x \in \gamma^{(n)} \cap \domain_n$. To handle boundary points, note that by the assumed setup, $\gamma^{(n)}$ never travels a positive distance along $\partial \domain$. The statement of the Key lemma then follows by the continuity of the curves $\gamma^{(n)}$ and  $\confmap_n^{-1} \circ P_{\eps} ( \tilde{\gamma}_{\UnitD}^{(n)}  )$.
 \hfill $\square$

\appendix

\section{Some postponed (easy) proofs}
\label{app:roskis}

For the sake of completeness, we give here the proofs of some elementary results, related to conformal and harmonic maps, that were used in the bulk of the note.

\begin{lem}
\label{lem: disconnecting arcs}
Let $\domainn$ be a simply-connected domain and $z, u \in \domainn$ with $d(z, \bdry \domainn) <  d(z, u)$ and $d \in (d(z, \bdry \domainn) , d(z, u))$. Then, there are finitely many connected components of the circle arc $S(z, d)$ in $\domainn$ that disconnect $z$ from $u$ in $\domainn$. A unique one of these disconnecting components is \emph{innermost}, in the sense that it disconnects all the others from $z$ in $\domainn$. If $d_1 < d_2$ and $S_1$, $S_2$ are the innermost disconnecting arcs of $S(z, d_1)$ and $S(z, d_2)$, respectively, then $S_1$ separates $S_2$ from $z$ in $\domainn$.
\end{lem}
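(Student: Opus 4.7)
The plan is to pull the picture back to the unit disc via a Riemann map $\confmap:\UnitD\to\domainn$ with $\confmap(0)=z$, set $q:=\confmap^{-1}(u)\in\UnitD$, and exploit the tree structure that disjoint cross cuts impose on the simply connected unit disc. A preliminary observation is that every component of $S(z,d)\cap\domainn$ is a proper open arc: if $S(z,d)$ lay inside $\domainn$, then by Jordan--Schoenflies and simple connectivity of $\domainn$ the bounded face $B(z,d)$ would also lie in $\domainn$, contradicting $\dist(z,\bdry\domainn)<d$. Consequently every component of $\confmap^{-1}(S(z,d)\cap\domainn)$ is an analytic cross cut of $\UnitD$ (not a closed loop), with both endpoints on $\bdry\UnitD$.

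Next I establish local finiteness in $\UnitD$: for any compact $K\subset\UnitD$, the image $\confmap(K)$ sits at some positive distance $\delta$ from $\bdry\domainn$, so any component $C$ of $S(z,d)\cap\domainn$ meeting $\confmap(K)$ contains a point $p$ with $\dist(p,\bdry\domainn)\ge\delta$; the sub-arc of $S(z,d)$ within Euclidean distance $\delta/2$ of $p$ stays in $\domainn$ and in the same component, yielding a lower bound $|C|\ge c(\delta,d)>0$. Combined with the circumferential upper bound $\sum_i|C_i|\le 2\pi d$, only finitely many cross cuts of $\confmap^{-1}(S(z,d)\cap\domainn)$ can meet $K$.

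I then set up the tree of regions: take as vertices the components of $\UnitD\setminus\confmap^{-1}(S(z,d))$ and as edges the cross cuts (each edge joining its two adjacent components). Disjointness of the cross cuts together with simple connectivity of $\UnitD$ forces this graph to be a tree, since any cycle would produce a closed loop in $\UnitD$ transversally crossing each cross cut of the cycle exactly once, which is impossible because a loop must cross any cross cut an even number of times (the cross cut splits $\UnitD$ into two pieces and the loop returns to its starting one). The regions $V_0\ni 0$ and $V_q\ni q$ are distinct since $z$ and $u$ lie in the disjoint open sets $B(z,d)\cap\domainn$ and $(\C\setminus\overline{B(z,d)})\cap\domainn$. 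Any smooth arc from $0$ to $q$ in $\UnitD$ is compact and so by the previous step crosses only finitely many cross cuts, yielding a finite walk in the tree; hence $V_0$ and $V_q$ lie at finite tree distance, and the components of $S(z,d)\cap\domainn$ disconnecting $z$ from $u$ are precisely the $\confmap$-images of the finitely many edges on the unique tree geodesic from $V_0$ to $V_q$, proving the finiteness claim.

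The innermost separating component is then identified as the $\confmap$-image of the first edge $L^*$ on that tree geodesic: every other separating cross cut lies on the far side of $L^*$ from $V_0$, so $L^*$ separates it from $0$ in $\UnitD$, and uniqueness follows from uniqueness of the first edge. For monotonicity in $d$, write $W_z^{(d)}$ for the component of $B(z,d)\cap\domainn$ containing $z$, so that $W_z^{(d_1)}\subset W_z^{(d_2)}$ and each $S_i$ lies on $\partial W_z^{(d_i)}$; applying the same tree argument to the combined arrangement of the two families $\confmap^{-1}(S(z,d_1))$ and $\confmap^{-1}(S(z,d_2))$ shows that the preimage of $S_1$ (lying strictly inside $V_0^{(d_2)}$) must be crossed before that of $S_2$ on the combined geodesic, which in $\domainn$ means $S_1$ separates $S_2$ from $z$. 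The main technical obstacle is the local finiteness step, which requires converting the combinatorial nesting of cross cuts into a quantitative arc-length lower bound; once that is in hand, all three claims of the lemma reduce to elementary bookkeeping on the tree of cross cuts in $\UnitD$.
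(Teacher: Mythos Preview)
Your proof is correct, but it takes a substantially more elaborate route than the paper's. The paper dispenses with the Riemann map, the arc-length lower bound, and the tree of regions entirely: it simply observes that $\domainn$ is open and connected, so there exists a polygonal path from $u$ to $z$ consisting of finitely many line segments in $\domainn$. Each segment meets the circle $S(z,d)$ in at most two points, so the whole path meets $S(z,d)$ in a finite set, and since every separating component must be crossed by this path there are only finitely many of them. Uniqueness of the innermost one and the monotonicity in $d$ then follow from the same disjointness observation you use: distinct separating components (or components for different radii) are pairwise disjoint cross cuts, so each lies entirely on one side of the other, giving a linear order.

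What your approach buys is a more structural picture: the tree of regions bounded by cross cuts, and the identification of separating arcs as edges on a tree geodesic, generalise immediately to arbitrary families of disjoint cross cuts, not just circle arcs. The local-finiteness step via an arc-length lower bound is also a genuinely different idea, and would be needed if one did not have the lucky fact that a straight segment meets a round circle at most twice. The paper's argument, by contrast, exploits exactly this algebraic coincidence to bypass all of that machinery; it is shorter and needs no conformal map, but it does not transparently extend beyond circles. For the purposes of this lemma the paper's route is preferable for economy, though your argument is a valid and instructive alternative.
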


\begin{proof}
There exists a broken line of finitely many line segments from $u$ to $z$ in $\domainn$. Such a broken line intersects $S(z, d)$  finitely many times, proving the finite number of separating components. The uniqueness of the innermost component  follows since the separating components $S^{(j)}$ of $S(z, d)$ are disjoint, and thus each $S^{(j)}$ falls into one connected component of $\domainn \setminus S^{(i)}$, where $i \ne j$. A similar disjointness argument shows the ordering of $S_1$ and $S_2$.
\end{proof}

\begin{proof}[Proof of Lemma~\ref{lem: conformal balls get close to boundary}]
Denote $K_\eps := \confmapD ( \overline{B (0 , 1 - \eps ) } )$, $K^{(n)}_\epsilon := \confmapD_{n} ( \overline{B (0 , 1 - \eps ) } )$ and $
\domain_{r} := \{  z \in \domain: \; d(z, \bdry \domain) \ge r  \}
$. 
 Fix $\epsilon_0 (\delta)$ so that $\domain_{\delta / 2} \subset K_\epsilon$ if $\epsilon < \epsilon_0$. Given $\epsilon < \epsilon_0$, denote $d :=d(K_\epsilon, \bdry \domain ) \leq \delta / 2$ and fix $n_0 (\epsilon, \delta)$ so that for all $n > n_0$: (a) $|\confmap_n^{-1} (\cdot)- \confmap^{-1} (\cdot)| <  d/2 \leq \delta /4$ on $B(0, 1-\epsilon)$ and (b) for all $w \in \bdry \domain$ there exists $w_n \in \bdry \domain_n$ with $|w-w_n|< d/4 \leq \delta /8$.
Now, for claim~(i), (a) above implies $\domain_{3 \delta /4} \subset K^{(n)}_\epsilon \subset \domain_{d/2}$ so by (b) $d(\zeta, \bdry \domain_n) <7\delta/8$. For claim~(ii), suppose first that $z \in \domain $. By assumption, then $z \in \domain \setminus K^{(n)}_\epsilon$ and from the above $K^{(n)}_\epsilon \supset \domain_{3 \delta /4}$ so $d(z, \partial \domain) < 3 \delta /4$ and by (b) $d(z, \bdry \domain_n) <7\delta/8$, a contradiction. Thus, we must have $ z \not \in \domain $, and any path $\eta$ from $z$ to $K^{(n)}_\epsilon$ thus has to intersect $\bdry \domain$. Denote $m=\min \{ d, d(z, \bdry \domain) \} / 8 > 0$ and let $\xi$ be the first point on the path $\eta$ (directed from $z$ to $K^{(n)}_\epsilon$) with $d(\xi, \bdry \domain) = m$; hence by property (b), $d(\xi, \bdry \domain_n) \leq 3d/8 $
and on the other hand
$d(\xi, K^{(n)}_\epsilon) \geq d(\bdry \domain, K^{(n)}_\epsilon) \geq d/2$ since $\bdry \domain$ separates $\xi$ from $K^{(n)}_\epsilon$. A point $u \in \bdry \domain_n$ with $d(\xi, u) \leq 3d/8 $ thus exists and cannot be separated from $\xi$ in $G_\delta$ by $K^{(n)}_\epsilon$. This concludes the proof.
\end{proof}

\begin{proof}[Proof of Lemma~\ref{lem: maximization property of h_m}]
Denote $\UnitD \setminus [p, q] = U$ for short. Green's third identity states that
\begin{align}
\label{eq: Green's third}
h (z) = \int_{w \in \bdry  U}  ( h (w) \nabla_w G(z, w) - G(z, w) \nabla_w  h (w)  ) \cdot \nu (w) \vert dw \vert \quad \text{for any } z \in U,
\end{align}
where $\nu (w)$ is the outward normal unit vector of $U$ at $w$, $\vert dw \vert$ denotes the length element along the boundary $\bdry  U$, and $G(z, w)$ is the Green's function of the Laplacian in any domain containing $U$. We choose the Green's function in $\UnitD$,
\begin{align*}
G(z, w) = \tfrac{1}{2 \pi} \log \vert \tfrac{z - w}{1 - z w^*} \vert.
\end{align*}

Now, the first term $h(w) \nabla_w G(z, w) \cdot \nu (w) $ on the right-hand side of \eqref{eq: Green's third} cancels out: $h (w) = 0$ on $\bdry \UnitD$, while $[p, q]$ is integrated in two directions with opposite normals $\nu (w)$. The second term disappears on $\bdry \UnitD$, leaving
\begin{align}
\label{eq: Green's third 2}
h (z) = - 2 \int_{w = p}^q  G(z, w) \vert \nabla_w  h (w)  \vert dw,
\end{align}
where we combined the integrations in two directions on the line segment $[p,q]$, using the facts that $\nabla_w  h (w)$, on either side of the segment, points outward of the domain $U$ and that its modulus  $\vert \nabla_w  h (w)  \vert$ is well defined even if the direction is not. Finally, examine the M\"{o}bius map $\UnitD \to \UnitD$ given by $z \mapsto \frac{z - w}{1 - z w^*} $, where $w \in [p, q]$: it preserves $\R \cap \UnitD$ (with orientation), mapping $z=w$ to the origin and the sphere $S(0, 1- \eps)$ to another $\R$-centered sphere that lies entirely left of the origin. Hence, for any $w \in [p, q]$, the function $G(z, w) = \frac{1}{2 \pi} \log \vert \frac{z - w}{1 - z w^*} \vert$ attains its unique minimum over $z \in \overline{B(0,1-\eps)}$ at $z = 1 - \eps$. The claim now follows from~\eqref{eq: Green's third 2}.
\end{proof}

\end{document}